\newcommand{\cmark}{\ding{51}}%
\newcommand{\xmark}{\ding{55}}%
\newtheorem{theorem}{Theorem}
\newtheorem{assumption}{Assumption}
\newtheorem{lemma}{Lemma}
\newtheorem{corollary}{Corollary}
\newcommand{\Var}{\mathrm{Var}}
\newcommand{\Cov}{\mathrm{Cov}}
\newcommand{\E}{\mathrm{E}}
\newcommand{\AC}{\mathcal{C}}
\newcommand{\diag}{\operatorname{diag}}
\newcommand{\mbf}{\mathbf}
\newcommand{\tabincell}[2]{\begin{tabular}{@{}#1@{}}#2\end{tabular}}
\title[PCA FOR TIME SERIES]{Asymptotic Theory of Principal Component Analysis for\\ Time Series Data with Cautionary Comments}
\author[Zhang X. and Tong H.]{Xinyu Zhang$^1$ and Howell Tong$^2$} 
\address{$^1$Center for Statistical Science, Department of Industrial Engineering, Tsinghua University,
Beijing,
China.} 
\address{$^2$School of Mathematical Sciences, University of Electronic Science and Technology of China,
Chengdu,
China;
Tsinghua University,
Beijing,
China;
London School of Economics and Political Science,
London,
UK.}
\begin{document}

\begin{abstract}
Principal component analysis (PCA) is a most frequently used statistical tool in almost all branches of data science. 
However, like many other statistical tools, there is sometimes the risk of misuse or even abuse. 
In this paper, we highlight possible pitfalls in using the theoretical results of PCA based on the assumption of independent data when the data are time series.
For the latter, we state with proof a central limit theorem of the eigenvalues and eigenvectors (loadings), give direct and bootstrap estimation of their asymptotic covariances, and assess their efficacy via simulation.
Specifically, we pay attention to the proportion of variation, which decides the number of principal components (PCs), and the loadings, which help interpret the meaning of PCs.
Our findings are that while the proportion of variation is quite robust to different dependence assumptions, the inference of PC loadings requires careful attention.
We initiate and conclude our investigation with an empirical example on portfolio management, in which the PC loadings play a prominent role.  It is given as a paradigm of correct usage of PCA for time series data.

\end{abstract}

\keywords{
Bootstrap, Inference, Limiting distribution, PCA, Portfolio management, Time series.
}

\section{Introduction}\label{sec:intro}
Principal component analysis (PCA) is probably one of the most widely used statistical tools in statistics and fields well beyond. 
Comprehensive summaries and numerous empirical examples of PCA can be found in books such as \cite{flury1997}, \cite{jolliffe2002} and \cite{tsay2005,tsay2013}.
Now, classical results for PCA are obtained under the assumption of \emph{independent} (vectorial) data.
However, applications to dependent data abound in which dependence is ignored. 
To cite but a few examples, we mention \cite{stone1947} in economics, \cite{craddock1965} and \cite{maryon1979} in meteorology, \cite{ahamad1967} for crime rates, \cite{feeney1967}, \cite{tsay2005}, \cite{wei2018} and \cite{keijsers2019} in finance, and others. 
The issues that  arise naturally are the defects, in both theory and practice, of this misuse, and a correct inference of PCA for time series. These  are the issues that this paper aims to resolve.

For independent data, classical results about the limiting distribution of the principal components (PCs) can be found in \cite{anderson2003} and are widely known. 
For dependent data, a comprehensive and practically relevant central limit theorem for time series data is lacking in the literature. 
In this paper, we will state with proof a central limit theorem for stationary and ergodic multivariate linear time series by building on \cite{hannan1976asymptotic}.
Besides the limiting distributions of estimated eigenvalues and estimated eigenvectors, 
we also give the limiting distribution of the proportion of variation to facilitate the decision of PC numbers.
Then, we introduce two special cases or approximations of the theorem when the processes are Gaussian, which assume simpler dependence structure, one being the classical result for independent data.
The efficacy of our theorem and the two special cases are assessed via simulation for Gaussian time series.
It turns out that the proportion of variation is a statistic quite robust to different modes of dependence, in that they have only negligible effect.
However, the situation with the loadings is markedly different.
In the simulation, the efficacy of our theorem is verified, while the two special cases behave poorly for the loadings numerically and thus are not practically useful for time series data.

Inference based on the limiting distribution necessitates the estimation of the asymptotic covariances of eigenvalues and eigenvectors.
For this, we give two methods: the direct and the bootstrap estimation.
The former is recommended for Gaussian processes and some light-tail processes, while the latter is applicable to general processes.

In practice, PCA for time series is widely implemented but inference is often missing or inappropriate.
To illustrate, we focus on a specific application field of PCA, namely the stock portfolio management.
Portfolios can be constructed based on the principal components and they are named ``principal portfolios"  \citep{partovi2004principal}.
In \cite{pasini2017principal}, it is shown that these principal portfolios can get enhanced returns and financial risk control.
Further, they can be viewed as individual assets, based on which all the asset allocation strategies can be applied to reduce risk.
See, e.g., \cite{yang2015application}.
An important benefit is that in  principal portfolios 
investors do not need to concern themselves with the co-movements among  assets.
The first principal portfolio is normally understood as the market component with roughly equal contributions of the underlying stocks.
A number of following principal portfolios always represent synchronized fluctuations that only happen to a group of stocks.

The obtained principal portfolios are each a linear combination of all the stocks. 
However, although diversification can help reducing risks, the benefit will decrease as the number of stocks increases. 
Investors always prefer a portfolio that is sufficiently diverse, at the same time being as small as possible.
Thus, implementing inference to the principal portfolio loadings is important since a much smaller portfolio could be obtained by discarding stocks with insignificant loadings.
The sparser loadings will also lead to a more refined interpretation.
In the real data example, we indicate how possible  principal portfolios could be constructed by performing a proper inference on the loadings of the PCA.
By so doing, we give a paradigm of  correct procedures of implementing PCA on time series, and highlight the fact that without an appropriate inference, mis-interpretations in general, and wrong portfolio strategy in particular, can arise.

The paper is organised as follows. In Section \ref{sec:the}, we give with proof a central limit theorem of the principal components for time series, and obtain the limiting distributions of eigenvalues, eigenvectors and the proportion of variation. 
To compare, in Section \ref{sec:special} we introduce two special cases of the theorem for Gaussian processes, one being the classical result for independent data. 
To estimate the asymptotic covariances of the eigenvalues and eigenvectors, two methods, the direct and the bootstrap estimation, are given in Section \ref{sec:numerical}.
In Section \ref{sec:simu}, we conduct simulations to assess the efficacy of the above theoretical results and 
highlight pitfalls of using classical results (based on the assumption of independent data) for time series. 
We also evaluate the numerical performance of the two estimation methods.
An empirical example of the stocks portfolio is given in Section \ref{sec:eg} to show a paradigm of  correct  implemention of PCA for time series data. 
We conclude in Section \ref{sec:con}. 
Details about the simulation are in  Appendix \ref{sec:app}.

\section{A central limit theorem}\label{sec:the}

Let $\mathbf{X}(t)$ be a stationary and ergodic $p$-dimensional linear vector process generated as
\begin{equation}
\mathbf{X}(t)=\sum_{j=0}^{\infty} \mathbf{G}(j) \mathbf{e}(t-j), \quad t \in \{...,-1,0,1,...\},
\label{eq:xt}
\end{equation}
where $\mathbf{e}(t)$ is $p$-dimensional, and $\E(\mathbf{e}(t))=\mathbf{0}_p$, $\E(\mathbf{e}(t_1)\mathbf{e}(t_2)^T)=\mathbf{K}\mathds{1}(t_1=t_2)$ with $\mathbf{K}$ a $p \times p$ nonsingular matrix.
Here $\mathds{1}(\cdot)$ is the indicator function. 
The Kronecker delta $\delta_{t_1t_2}$ is 1 when $t_1=t_2$ and 0 otherwise. 
Here $\mathbf{G}(j)$'s are $p \times p$ matrices with 
$\sum_{j=0}^{\infty} \operatorname{tr} \{\mathbf{G}(j) \mathbf{K G}(j)^T\}<\infty$.

Let $\mathbf{\Gamma}(s)=\mathbf{\Gamma}_{\mbf{X}}(s)=\E(\mathbf{X}(t+s)\mathbf{X}(t)^T)$ for $s \in \{...,-1,0,1,...\}$.
Specifically, let $\mathbf{\Gamma}\equiv \mathbf{\Gamma}(0)$ be the covariance matrix.
Then, the spectral density matrix of $\mathbf{X}(t)$ at frequency $\omega$ is defined as
\begin{equation*}
    f(\omega)=(2\pi)^{-1}\sum_{s=-\infty}^{\infty}\exp\{-\mathrm{i}s\omega\}\mathbf{\Gamma}(s),
\end{equation*}
where $\mathrm{i}=(-1)^{1/2}$.
Let the transfer function be $h(\omega)=\sum_{j=0}^{\infty} \mathbf{G}(j)e^{-\mathrm{i}j\omega}$.

Suppose we have observations $\{\mathbf{X}(1),...,\mathbf{X}(n)\}$. Define the sample covariance matrix as $\mathbf{C}_n=(1/n) \sum_{t=1}^n (\mathbf{X}(t)-\bar{\mathbf{X}})(\mathbf{X}(t)-\bar{\mathbf{X}})^T$, where $\bar{\mathbf{X}}$ is the sample mean. 
Denote the $(i,j)$-th component of $\mathbf{C}_n$, $\mathbf{\Gamma}$, $h(\omega)$ and $f(\omega)$ by $\mathbf{C}_{n,ij}$, $\mathbf{\Gamma}_{ij}$, $h_{ij}(\omega)$ and $f_{ij}(\omega)$, respectively. Denote the conjugate of $f_{ij}(\omega)$ by $\overline{f_{ij}(\omega)}$.

Let $\bm{\lambda}=(\lambda_1,...,\lambda_p)$ and $\bm{\Phi}=(\bm{\phi}_1,..,\bm{\phi}_p)$
be the eigenvalues and eigenvectors of $\mathbf{\Gamma}$, respectively.
Assume all eigenvalues  are positive and distinct. Specifically, $\lambda_1>\lambda_2>...>\lambda_p>0$.
Let $\mathbf{l}=(l_1,...,l_p)$ and $\mathbf{A}=(\mathbf{a}_1,..,\mathbf{a}_p)$ be the eigenvalues and eigenvectors of $\mathbf{C}_n$, respectively.
Let $\bm{\Lambda}=\diag(\bm{\lambda})$ and $\mathbf{L}=\diag(\mathbf{l})$. Then we can express $\mathbf{\Gamma}=\mathbf{\Phi \Lambda \Phi}^T$ and $\mathbf{C}_n=\mathbf{ALA}^T$.

When applying PCA, we pay attention to
the number and loadings of PCs.
The number of PCs are usually decided by the proportion of variation
\begin{equation}\label{eq:pro}
    {\gamma}_k=\sum_{i=1}^k \lambda_i/\sum_{i=1}^p \lambda_i
    =\sum_{i=1}^k \lambda_i/\mathrm{tr}(\mbf{\Gamma}),
    \quad 
    {r}_k=\sum_{i=1}^k l_i/\sum_{i=1}^p l_i
    =\sum_{i=1}^k l_i/\mathrm{tr}(\mathbf{C}_n),
\end{equation}
for $k=1,...,p$.
Let $\bm{\gamma}=(\gamma_1,\ldots,\gamma_p)$ and $\mbf{r}=(r_1,\ldots,r_p)$.
To derive the asymptotics, we introduce two assumptions.
\begin{assumption}\label{ass1}
Let $\mathcal{F}_t$ be the sub $\sigma$-algebra generated by $\{\mathbf{X}(n), n \leq t\}$. 
Denote the $a$-th component of $\mathbf{e}(t)$ as $e_a(t),a=1,...,p$. Suppose that 
\begin{align}
\begin{split}
&\E\{e_a(t)|\mathcal{F}_{t-1}\},\quad
\E\{e_a(t)e_b(t)|\mathcal{F}_{t-1}\},\\
&\E\{e_a(t)e_b(t)e_c(t)|\mathcal{F}_{t-1}\},\quad
\E\{e_a(t)e_b(t)e_c(t)e_d(t)|\mathcal{F}_{t-1}\},
\label{eq:moments}
\end{split}
\end{align}
are constants for all $a,b,c,d=1,...,p$.
\end{assumption}

\begin{assumption}\label{ass2}
Suppose $f_{kk}(\omega), k=1,2,...,p,$ are all square integrable.
\end{assumption}

The following lemma is from \cite{hannan1976asymptotic}, after correcting some (possibly typographical) errors in the second term of his equation (3).

\begin{lemma}
Suppose $\mathbf{X}(t)$ is stationary, ergodic, and generated by (\ref{eq:xt}).
Then, under Assumptions \ref{ass1} and \ref{ass2}, for $i,j,k,l=1,...,p$,
$\sqrt{n}\{\mathbf{C}_{n,ij}-\mathbf{\Gamma}_{ij}\}$ has a joint asymptotic normal distribution with zero mean, and the asymptotic covariance between $\sqrt{n}\{\mathbf{C}_{n,ij}-\mathbf{\Gamma}_{ij}\}$ and $\sqrt{n}\{\mathbf{C}_{n,kl}-\mathbf{\Gamma}_{kl}\}$ is 
\begin{align}
\begin{split}
&2 \pi \int_{-\pi}^{\pi}\left\{f_{i k}(\omega) \overline{f_{j l}(\omega)}+f_{i l}(\omega) \overline{f_{j k}(\omega)}\right\} d \omega\\
+&(4 \pi^2)^{-1}\sum_{a,b,c,d=1}^p\kappa_{abcd}\int_{-\pi}^{\pi}\int_{-\pi}^{\pi}h_{ia}(\omega_1)\overline{h_{jb}(\omega_1)} \overline{h_{kc}(\omega_2)}h_{ld}(\omega_2)d\omega_1 d\omega_2,
\label{1}
\end{split}
\end{align}
where $\kappa_{abcd}$ is the joint fourth cumulant of $e_a(t)$, $ e_b(t)$, $ e_c(t)$ and $e_d(t)$. Specifically,
\begin{align*}
\kappa_{abcd}=&\E\{e_a(t) e_b(t) e_c(t) e_d(t)\}-\E\{e_a(t) e_b(t)\}\E\{ e_c(t) e_d(t)\}\\
&-\E\{e_a(t) e_c(t)\}\E\{ e_b(t) e_d(t)\}-\E\{e_a(t) e_d(t)\}\E\{ e_b(t) e_c(t)\}.
\end{align*}
\label{lem}
\end{lemma}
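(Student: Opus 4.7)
The plan is to proceed in three steps: reduce the centred statistic to a quadratic form in the innovations, establish joint asymptotic normality, and identify the limiting covariance via a cumulant/spectral calculation. Since $\bar{\mathbf X}=O_p(n^{-1/2})$, we have $\sqrt n\{\mathbf C_{n,ij}-\mathbf\Gamma_{ij}\}= n^{-1/2}\sum_{t=1}^{n}\{X_i(t)X_j(t)-\mathbf\Gamma_{ij}\}+o_p(1)$. Substituting the MA$(\infty)$ representation (\ref{eq:xt}) expresses the remaining sum as a bilinear functional of $\{\mathbf e(t)\}$. Under Assumption~\ref{ass1}, $\{\mathbf e(t)\}$ has constant conditional moments through order four, which in particular forces $\E\{e_a(t)\mid\mathcal{F}_{t-1}\}=0$; together with $\sum_j\operatorname{tr}\{\mathbf G(j)\mathbf K\mathbf G(j)^T\}<\infty$ and Assumption~\ref{ass2}, this lets one truncate the MA representation at lag $M$, apply a CLT for $m$-dependent stationary sequences to the resulting $(2M)$-dependent quadratic functional, and then let $M\to\infty$. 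Joint normality across the pairs $(i,j)$ follows by the Cram\'er--Wold device, since any linear combination of the centred $\{\mathbf C_{n,ij}\}$ is again a quadratic functional of the same linear process.

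The heart of the proof is the covariance calculation. By stationarity, $n\,\mathrm{Cov}(\mathbf C_{n,ij},\mathbf C_{n,kl})\to\sum_{h\in\mathbb Z}\mathrm{Cov}(X_i(0)X_j(0),X_k(h)X_l(h))$, and the usual fourth-moment expansion decomposes this covariance into $\mathbf\Gamma_{ki}(h)\mathbf\Gamma_{lj}(h)+\mathbf\Gamma_{li}(h)\mathbf\Gamma_{kj}(h)+\kappa_X(h)$, where $\kappa_X(h)$ is the joint fourth cumulant of $(X_i(0),X_j(0),X_k(h),X_l(h))$. Using $\mathbf\Gamma_{ij}(h)=\int e^{ih\omega}f_{ij}(\omega)d\omega$ and $f_{ij}(-\omega)=\overline{f_{ij}(\omega)}$ (valid since the autocovariances are real), Parseval converts each sum $\sum_h\mathbf\Gamma_{ik}(h)\mathbf\Gamma_{jl}(h)$ into $2\pi\int f_{ik}(\omega)\overline{f_{jl}(\omega)}d\omega$, reproducing the first term of (\ref{1}). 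For $\kappa_X(h)$, substitute (\ref{eq:xt}): multilinearity of cumulants, together with the fact that the only surviving contributions come from matched time indices of $\{\mathbf e(t)\}$, reduces $\kappa_X(h)$ to $\sum_{a,b,c,d}\kappa_{abcd}\sum_{r}G_{ia}(r)G_{jb}(r)G_{kc}(r+h)G_{ld}(r+h)$. Summing over $h$ decouples the two inner sums, and Parseval applied to each factor via $\sum_r G_{ia}(r)G_{jb}(r)=(2\pi)^{-1}\int h_{ia}(\omega_1)\overline{h_{jb}(\omega_1)}d\omega_1$ produces the double integral in (\ref{1}).

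The main obstacle is the CLT step itself: the quadratic summands $X_i(t)X_j(t)-\mathbf\Gamma_{ij}$ are not martingale differences, so the $M$-truncation argument requires a uniform-in-$M$ control of the variance of the truncation remainder, resting on Assumption~\ref{ass2}. A subtler but more immediate issue is the bookkeeping of conjugates in the cumulant term: the sign convention in $G_{kc}(r+h)G_{ld}(r+h)$, combined with the symmetry $h_{ab}(-\omega)=\overline{h_{ab}(\omega)}$ used to swap the conjugate between the two factors of the $(k,l)$-integral, dictates the final pattern $h_{ia}(\omega_1)\overline{h_{jb}(\omega_1)}\,\overline{h_{kc}(\omega_2)}h_{ld}(\omega_2)$ in (\ref{1}). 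Mis-tracking precisely these conjugates is presumably what produced the typographical slip in Hannan's original equation (3) that the authors flag here.
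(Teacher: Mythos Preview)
The paper does not supply its own proof of this lemma: it simply attributes the result to \cite{hannan1976asymptotic} and corrects a typographical slip in the cumulant term. So the relevant comparison is between your sketch and Hannan's original argument. Your covariance computation---the Parseval reduction of $\sum_h\mathbf\Gamma_{ik}(h)\mathbf\Gamma_{jl}(h)$ and the decoupling of the fourth-cumulant sum into the double integral over the transfer functions---is correct and is exactly what produces (\ref{1}); your remark about the conjugate bookkeeping in the $(k,l)$-factor is also on target.

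There is, however, a genuine gap in the CLT step. You propose to truncate the MA$(\infty)$ at lag $M$ and invoke a CLT for $m$-dependent stationary sequences. That would be fine if $\{\mathbf e(t)\}$ were i.i.d., but Assumption~\ref{ass1} only asserts that the conditional moments of $\mathbf e(t)$ given $\mathcal F_{t-1}$ are constant through order four. This makes $\{\mathbf e(t)\}$ a martingale-difference sequence with deterministic conditional second, third and fourth moments, but it does \emph{not} make $\mathbf e(t)$ independent of $\mathcal F_{t-1}$; higher conditional moments may still depend on the past. Consequently the truncated quadratic functional $X_i^{(M)}(t)X_j^{(M)}(t)$ is not $(2M)$-dependent, and the $m$-dependent CLT does not apply as stated. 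Hannan's route is different: he exploits precisely the constancy of the conditional moments in Assumption~\ref{ass1} to build a martingale approximation to $n^{-1/2}\sum_t\{X_i(t)X_j(t)-\mathbf\Gamma_{ij}\}$ and then applies a martingale CLT. The specific form of Assumption~\ref{ass1}---conditional moments up to order four, no more---is tailored to that argument: second-order constancy controls the quadratic part, and fourth-order constancy is what makes the conditional variance of the martingale increments deterministic. If you want to retain a truncation-plus-approximation scheme, you would need to replace the $m$-dependent CLT by a CLT for stationary ergodic martingale-difference arrays applied to the truncated quadratic, and then control the remainder; but at that point you are essentially reproducing Hannan's argument.
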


Let $\mathbf{Y}(t)=\mathbf{\Phi}^T\mathbf{X}(t)$ be the principal components of $\mathbf{X}(t)$, and $q(\omega)=\bm{\Phi}^Th(\omega)$ be its transfer function.
Then, the spectral density matrix of $\mathbf{Y}(t)$ is 
\begin{equation*}
    g(\omega)=\bm{\Phi}^Tf(\omega)\bm{\Phi}=(2\pi)^{-1}\sum_{s=-\infty}^{\infty}\exp\{-\mathrm{i}s\omega\}\mathbf{\Gamma}_{\textbf{Y}}(s),
\end{equation*}
with $\mathbf{\Gamma}_{\textbf{Y}}(s)=\E(\mathbf{Y}(t+s)\mathbf{Y}(t)^T)$ for $s \in \{...,-1,0,1,...\}$.
Although $\mathbf{\Gamma}_{\textbf{Y}}(s)$ is diagonal at $s=0$, it is not so at other $s$.
Thus, different from the PCs obtained for i.i.d. data, which is free from spatial dependence, there is temporal spatial dependence in $\mathbf{Y}(t)$.
Applying Lemma \ref{lem} to $\mathbf{Y}(t)$, Corollary \ref{coro:g} follows.

\begin{corollary}\label{coro:g}
Let $\mathbf{T}_n=\bm{\Phi}^T\mathbf{C}_n\bm{\Phi}$ and $\mathbf{U}=\sqrt{n}(\mathbf{T}_n-\bm{\Lambda})$. Denote the $(i,j)$-th component of $\mathbf{U}$ by $u_{ij}$. Then, $\{u_{ij}\}$ has a joint asymptotic normal distribution whose mean is zero and whose asymptotic covariance (denoted as $\AC$) between $\{u_{ij}\}$ and $\{u_{kl}\}$ is
\begin{align}
\begin{split}
\AC(u_{ij},u_{kl})=
&2 \pi \int_{-\pi}^{\pi}\left\{g_{i k}(\omega) \overline{g_{j l}(\omega)}+g_{i l}(\omega) \overline{g_{j k}(\omega)}\right\} d \omega\\
+&(4 \pi^2)^{-1}\sum_{a,b,c,d=1}^p\kappa_{abcd}\int_{-\pi}^{\pi}\int_{-\pi}^{\pi}q_{ia}(\omega_1)\overline{q_{jb}(\omega_1)} \overline{q_{kc}(\omega_2)}q_{ld}(\omega_2)d\omega_1 d\omega_2.
\label{eq:u}
\end{split}
\end{align}
\end{corollary}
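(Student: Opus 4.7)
The plan is to apply Lemma \ref{lem} directly to the rotated process $\mathbf{Y}(t) = \bm{\Phi}^T \mathbf{X}(t)$. Once one recognises $\mathbf{T}_n$ as the sample covariance matrix of $\mathbf{Y}(t)$ and $\bm{\Lambda}$ as its population covariance matrix, the corollary follows by substitution into formula (\ref{1}).

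First I would verify that $\mathbf{Y}(t)$ inherits all hypotheses of Lemma \ref{lem}. Since $\bm{\Phi}$ is deterministic and orthogonal, $\mathbf{Y}(t) = \sum_{j=0}^{\infty} \tilde{\mathbf{G}}(j)\mathbf{e}(t-j)$ with $\tilde{\mathbf{G}}(j) = \bm{\Phi}^T \mathbf{G}(j)$ is a linear process of the form (\ref{eq:xt}) driven by the \emph{same} innovations $\mathbf{e}(t)$. Stationarity and ergodicity are inherited, and by cyclicity of the trace together with $\bm{\Phi} \bm{\Phi}^T = \mathbf{I}_p$,
\begin{equation*}
\sum_{j=0}^\infty \operatorname{tr}\{\tilde{\mathbf{G}}(j) \mathbf{K} \tilde{\mathbf{G}}(j)^T\} = \sum_{j=0}^\infty \operatorname{tr}\{\mathbf{G}(j) \mathbf{K} \mathbf{G}(j)^T\} < \infty.
\end{equation*}
Assumption \ref{ass1} constrains the conditional moments of $\mathbf{e}(t)$, which are untouched by the rotation. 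For Assumption \ref{ass2}, positive-semidefiniteness of the Hermitian matrix $f(\omega)$ gives $0 \leq g_{kk}(\omega) = \bm{\phi}_k^T f(\omega) \bm{\phi}_k \leq \operatorname{tr}\{f(\omega)\} = \sum_{i=1}^p f_{ii}(\omega)$, and hence $g_{kk}(\omega)^2 \leq p \sum_i f_{ii}(\omega)^2$ is integrable by the hypothesis on the $f_{ii}$.

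Next I would identify $\mathbf{T}_n$ and $\bm{\Lambda}$ as the sample and population covariances of $\mathbf{Y}(t)$. Writing $\bar{\mathbf{Y}} = \bm{\Phi}^T \bar{\mathbf{X}}$, a direct computation gives $(1/n)\sum_{t=1}^n (\mathbf{Y}(t) - \bar{\mathbf{Y}})(\mathbf{Y}(t) - \bar{\mathbf{Y}})^T = \bm{\Phi}^T \mathbf{C}_n \bm{\Phi} = \mathbf{T}_n$, and $\E(\mathbf{Y}(t)\mathbf{Y}(t)^T) = \bm{\Phi}^T \mathbf{\Gamma} \bm{\Phi} = \bm{\Lambda}$. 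Applying Lemma \ref{lem} to $\mathbf{Y}(t)$, whose transfer function is $q(\omega) = \bm{\Phi}^T h(\omega)$, whose spectral density is $g(\omega) = \bm{\Phi}^T f(\omega) \bm{\Phi}$, and whose driving cumulants $\kappa_{abcd}$ are the same as for $\mathbf{X}(t)$, yields the joint asymptotic normality of $u_{ij} = \sqrt{n}(\mathbf{T}_{n,ij} - \Lambda_{ij})$ with zero mean and covariance obtained from (\ref{1}) via the substitutions $f \mapsto g$ and $h \mapsto q$, which is precisely (\ref{eq:u}).

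There is no serious obstacle: the corollary is essentially a change of basis applied to the previous lemma. The only mildly delicate points are the transfer of Assumption \ref{ass2} from $f$ to $g$, handled above by the positive-semidefiniteness argument, and the observation that replacing the (true, zero) mean of $\mathbf{X}(t)$ with the sample mean $\bar{\mathbf{X}}$ in the definition of $\mathbf{C}_n$ contributes only an $O_p(n^{-1})$ term, which is negligible at the $\sqrt{n}$ scale and therefore does not affect the limit distribution.
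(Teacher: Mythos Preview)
Your proposal is correct and follows exactly the route the paper takes: the paper simply says ``Applying Lemma \ref{lem} to $\mathbf{Y}(t)$, Corollary \ref{coro:g} follows,'' and you have spelled out the details of that application. Your verification that the hypotheses transfer to the rotated process (in particular the treatment of Assumption \ref{ass2} via positive-semidefiniteness of $f(\omega)$) is more careful than what the paper makes explicit, but the underlying argument is identical.
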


Now, we state the main result, a central limit theorem for the eigenvalues, the eigenvectors and the proportion of variation as follows.

\begin{theorem}
Let $\mathbf{X}(t)$ satisfy the assumptions of Lemma \ref{lem} and suppose $\lambda_1>\lambda_2>...>\lambda_p>0$. 
Define $\mathbf{D}=\sqrt{n}(\mathbf{L-\Lambda})$, $\mathbf{H}=\sqrt{n}(\mathbf{A-\Phi})$ and $\mathbf{V}=\mathbf{\Phi}^T \mathbf{H}$.
Then, for the elements of the matrix $\mathbf{D}=\diag(d_i)$ and $\mathbf{V}=\{v_{ij}\}$ for $i,j=1,...,p$, 
the following results hold 
\begin{equation}\label{eq:dw}
    d_{i}=u_{ii}+o(1), \quad v_{ii}=o(1), \quad \mbox{and} \quad v_{ij}=u_{ij}/(\lambda_j-\lambda_i)+o(1), i \neq j,
\end{equation}
a.s. as $n \to \infty$.
Then, the limiting distributions of $\mathbf{D}$ and $\mathbf{H}$ are obtained since the asymptotic distribution of $\{u_{ij}\}$ is given in Corollary \ref{coro:g}.

Let $\Longrightarrow$ denote weak convergence. We show the asymptotics of the proportion of variation and loadings of PCs specifically.
\begin{enumerate}[(1)]
    \item The eigenvalues $\mathbf{l}$ has the limiting distribution $\sqrt{n}(\mathbf{l}-\bm{\lambda})\Longrightarrow \mathcal{N}(0, \mathbf{B})$, where $\mathbf{B}=\{b_{ij}\}_{i,j=1,\ldots,p}$ with $b_{ij}=\AC(u_{ii},u_{jj})$.
    
    For $k=1,...,p$, the proportion of variation ${r}_k$ has the limiting distribution
\begin{equation}\label{eq:eta}
    \sqrt{n}({r}_k-{\gamma}_k)\Longrightarrow \mathcal{N}(0,\eta_k^2),\quad 
    \eta_k^2=\sum_{i=1}^p\sum_{j=1}^p\AC(u_{ii},u_{jj})\beta_i(k)\beta_j(k),
\end{equation}
with 
$\beta_i(k)=(\mathds{1}(i\leq k)-{\gamma}_k)/\mathrm{tr}(\mathbf{\Gamma})$.
\item For $k=1,...,p$, the PC loadings $\mbf{a}_k$ has the limiting distribution
\begin{equation}\label{eq:Sigma}
    \sqrt{n}(\mbf{a}_k-\bm{\phi}_k)\Longrightarrow \mathcal{N}(0,\mbf{\Sigma}_k),\quad
    \mbf{\Sigma}_k=\sum_{i=1,i \neq k}^p \sum_{j=1,j \neq k}^p \frac{\AC(u_{ik},u_{jk})}{(\lambda_k-\lambda_i)(\lambda_k-\lambda_j)}\bm{\phi}_i\bm{\phi}_j^T.
\end{equation}
\end{enumerate}
\label{the}
\end{theorem}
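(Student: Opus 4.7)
The overall strategy is to reduce everything to linear functions of the entries $u_{ij}$ of the matrix $\mathbf{U}=\sqrt{n}(\mathbf{T}_n-\bm{\Lambda})$, whose joint asymptotic normality is given by Corollary \ref{coro:g}. The plan therefore has three layers: first, establish the stochastic expansions (\ref{eq:dw}) by classical matrix perturbation in the transformed basis $\bm{\Phi}$; second, read off the eigenvalue CLT (which is immediate) and the proportion-of-variation CLT (delta method on $\mathbf{l}$); third, translate the expansion of $\mathbf{V}$ back to $\mathbf{H}=\sqrt{n}(\mathbf{A}-\bm{\Phi})$ to get the loadings CLT. The main obstacle is the bookkeeping in the first step, because the diagonal relation $v_{ii}=o(1)$ is \emph{not} first-order zero from the eigenequation alone; it comes from orthonormality.

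For the perturbation step I would work entirely in the rotated basis. Since $\mathbf{C}_n=\mathbf{A}\mathbf{L}\mathbf{A}^T$ and $\mathbf{T}_n=\bm{\Phi}^T\mathbf{C}_n\bm{\Phi}$, writing $\mathbf{W}=\bm{\Phi}^T\mathbf{A}=\mathbf{I}+\mathbf{V}/\sqrt{n}$ gives the identity
\begin{equation*}
\bm{\Lambda}+\frac{\mathbf{U}}{\sqrt{n}}=\mathbf{T}_n=\mathbf{W}\mathbf{L}\mathbf{W}^T=\Bigl(\mathbf{I}+\frac{\mathbf{V}}{\sqrt{n}}\Bigr)\mathbf{L}\Bigl(\mathbf{I}+\frac{\mathbf{V}^T}{\sqrt{n}}\Bigr),
\end{equation*}
together with the orthonormality constraint $\mathbf{W}^T\mathbf{W}=\mathbf{I}$, which expands to $\mathbf{V}+\mathbf{V}^T=-\mathbf{V}^T\mathbf{V}/\sqrt{n}$. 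The constraint forces $2v_{ii}=-\sum_k v_{ki}^2/\sqrt{n}$, hence $v_{ii}=o(1)$ a.s.\ (using that $\mathbf{C}_n\to\mathbf{\Gamma}$ a.s.\ by ergodicity, so $\mathbf{V}$ is bounded a.s.). Multiplying out the first display and matching entries yields $\mathbf{U}=\mathbf{D}+\mathbf{V}\mathbf{L}+\mathbf{L}\mathbf{V}^T+\mathbf{V}\mathbf{L}\mathbf{V}^T/\sqrt{n}$. Reading the diagonal gives $d_i=u_{ii}-2l_iv_{ii}+O(1/\sqrt{n})=u_{ii}+o(1)$, and reading the $(i,j)$ off-diagonal, combined with the skew-symmetry-to-leading-order $v_{ji}=-v_{ij}+o(1)$ and with $l_i\to\lambda_i$ a.s., yields $v_{ij}(\lambda_j-\lambda_i)=u_{ij}+o(1)$, which is (\ref{eq:dw}). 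Distinctness of the $\lambda_i$ is used here to invert $(\lambda_j-\lambda_i)$.

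Once (\ref{eq:dw}) is in hand, the eigenvalue CLT $\sqrt{n}(\mathbf{l}-\bm{\lambda})\Rightarrow\mathcal{N}(0,\mathbf{B})$ with $b_{ij}=\AC(u_{ii},u_{jj})$ follows immediately from Corollary \ref{coro:g} and Slutsky. For the proportion of variation, I would apply the delta method to the smooth map $(l_1,\dots,l_p)\mapsto r_k=\sum_{i\le k}l_i/\sum_i l_i$ at $\bm{\lambda}$. A direct computation gives $\partial r_k/\partial l_i\big|_{\bm{\lambda}}=(\mathds{1}(i\le k)-\gamma_k)/\mathrm{tr}(\mathbf{\Gamma})=\beta_i(k)$, so that $\sqrt{n}(r_k-\gamma_k)=\sum_i\beta_i(k)u_{ii}+o(1)$, and the quadratic form in $\AC(u_{ii},u_{jj})$ gives $\eta_k^2$ in (\ref{eq:eta}).

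Finally, for the loadings I would convert $\mathbf{V}$ back to $\mathbf{H}$ via $\mathbf{A}-\bm{\Phi}=\bm{\Phi}(\mathbf{W}-\mathbf{I})=\bm{\Phi}\mathbf{V}/\sqrt{n}$, so that column-by-column
\begin{equation*}
\sqrt{n}(\mathbf{a}_k-\bm{\phi}_k)=\sum_{i=1}^{p}v_{ik}\bm{\phi}_i=v_{kk}\bm{\phi}_k+\sum_{i\neq k}v_{ik}\bm{\phi}_i.
\end{equation*}
Substituting $v_{kk}=o(1)$ and $v_{ik}=u_{ik}/(\lambda_k-\lambda_i)+o(1)$ from (\ref{eq:dw}) yields $\sqrt{n}(\mathbf{a}_k-\bm{\phi}_k)=\sum_{i\neq k}u_{ik}\bm{\phi}_i/(\lambda_k-\lambda_i)+o(1)$, which is a linear combination of jointly asymptotically normal variables. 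Computing its outer-product covariance from Corollary \ref{coro:g} gives exactly the double sum $\mathbf{\Sigma}_k$ in (\ref{eq:Sigma}). The step I expect to be most delicate is the first-order skew-symmetry $v_{ji}=-v_{ij}+o(1)$ used to solve the off-diagonal equations; everything else is a standard delta-method or linear-map argument on top of Corollary \ref{coro:g}.
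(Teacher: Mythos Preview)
Your proposal is correct and follows essentially the same route as the paper: the paper also rotates to the $\bm{\Phi}$-basis, writes $\mathbf{Z}=\bm{\Phi}^T\mathbf{A}$ (your $\mathbf{W}$), substitutes the three first-order expansions into $\mathbf{T}_n=\mathbf{ZLZ}^T$ and $\mathbf{I}=\mathbf{ZZ}^T$, cites Anderson (2003, p.~546) for the resulting identities (\ref{eq:dw}), and then reads off the CLTs for $\mathbf{D}$, $\mathbf{H}=\bm{\Phi}\mathbf{V}$ and, via the delta method, for $r_k$. You have simply written out the Anderson perturbation argument that the paper leaves as a citation; one small wording issue is that ``$\mathbf{V}$ is bounded a.s.'' should be replaced by ``$\mathbf{V}/\sqrt{n}\to 0$ a.s.'' (which follows from $\mathbf{C}_n\to\mathbf{\Gamma}$ a.s.\ and eigen-continuity under distinct eigenvalues) to justify the $o(1)$ remainders.
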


\begin{proof}
Let $\mathbf{Z}=\mathbf{\Phi}^T \mathbf{A}$, then $\mathbf{T}_n=\mathbf{ZLZ}^T$, and $\mathbf{V}=\sqrt{n}(\mathbf{Z}- \mathbf{I})$.
By the same argument as \cite{anderson2003} (Page 546), substituting $\mathbf{T}_n=\mathbf{\Lambda}+(1 / \sqrt{n}) \mathbf{U}$, $\mathbf{Z}=\mathbf{I}+(1 / \sqrt{n}) \mathbf{V}$ and $\mathbf{L}=\mathbf{\Lambda}+(1 / \sqrt{n}) \mathbf{D}$ in the equations $\mathbf{T}_n=\mathbf{ZLZ}^T$ and $\mathbf{I}=\mathbf{ZZ}^T$, the result in (\ref{eq:dw}) follows.
Using the fact that  $\mathbf{H}=\mathbf{\Phi V}$ and Corollary \ref{coro:g}, the limiting distributions of $\mathbf{D}$ and $\mathbf{H}$ are obtained.
Using the delta method, the limiting distribution of ${r}_k$ is obtained by noting the fact that 
it is a differentiable function of $\mathbf{l}$.
Thus, the proof is complete.
\hfill $\square$
\end{proof}

For Gaussian processes, $\kappa_{abcd}$ is zero, and (\ref{eq:u}) is simplified to
\begin{equation}\label{eq:u_g}
    \AC(u_{ij},u_{kl})=
2 \pi \int_{-\pi}^{\pi}\left\{g_{i k}(\omega) \overline{g_{j l}(\omega)}+g_{i l}(\omega) \overline{g_{j k}(\omega)}\right\} d \omega.
\end{equation}

\begin{corollary}\label{coro:general}
Let $\mathbf{X}(t)$ be a Gaussian process that satisfies the assumptions of Theorem \ref{the}.
Then, the following asymptotic results hold for $k=1,2,...,p$.
\begin{enumerate}[(1)]
\item Jointly, $\mathbf{l}$ and $\mathbf{a}_k$ are asymptotically normally distributed and asymptotically unbiased.

\item For $\mbf{l}$, $\sqrt{n}(\mathbf{l}-\bm{\lambda})\Longrightarrow \mathcal{N}(0, \mathbf{B})$, where
$b_{ij}=
4 \pi \int_{-\pi}^{\pi}|g_{ij}(\omega)|^2 d \omega$.
Thus, (\ref{eq:eta}) is now
\begin{equation}\label{eq:eta_g}
    \sqrt{n}({r}_k-{\gamma}_k)\Longrightarrow \mathcal{N}(0,\eta_k^2),\quad\ where\;\;
    \eta_k^2=\sum_{i=1}^p\sum_{j=1}^p4 \pi \int_{-\pi}^{\pi}|g_{ij}(\omega)|^2 d \omega\beta_i(k)\beta_j(k).
\end{equation}
\item  For $\sqrt{n}(\mbf{a}_k-\bm{\phi}_k)\Longrightarrow \mathcal{N}(0,\mbf{\Sigma}_k)$, (\ref{eq:Sigma}) is now
\begin{align}\label{eq:Sigma_g}
    \mbf{\Sigma}_k=\sum_{i=1,i \neq k}^p \sum_{j=1,j \neq k}^p 
    \frac{2 \pi \int_{-\pi}^{\pi}\left\{g_{i j}(\omega) \overline{g_{kk}(\omega)}+g_{i k}(\omega) \overline{g_{kj}(\omega)}\right\} d \omega}{(\lambda_k-\lambda_i)(\lambda_k-\lambda_j)}\bm{\phi}_i\bm{\phi}_j^T.
\end{align}
\end{enumerate}
\end{corollary}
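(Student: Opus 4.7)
My plan is to derive Corollary \ref{coro:general} as a direct specialisation of Theorem \ref{the} to the Gaussian case. The key fact I would use is that if $\mathbf{e}(t)$ is Gaussian, then all joint cumulants of order greater than two vanish, so in particular $\kappa_{abcd}=0$ for every $a,b,c,d=1,\ldots,p$. Plugging this into the expression (\ref{eq:u}) for $\AC(u_{ij},u_{kl})$ annihilates the double-integral cumulant term, leaving the simplified formula (\ref{eq:u_g}) that the corollary takes as its starting point.

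Next I would verify parts (2) and (3) as routine index substitutions. For part (2), Theorem \ref{the} gives $b_{ij}=\AC(u_{ii},u_{jj})$, so substituting $(i,j,k,l)\mapsto(i,i,j,j)$ in (\ref{eq:u_g}) makes both bracketed terms equal to $g_{ij}(\omega)\overline{g_{ij}(\omega)}=|g_{ij}(\omega)|^2$, producing $b_{ij}=4\pi\int_{-\pi}^{\pi}|g_{ij}(\omega)|^2\,d\omega$. Inserting this into the definition of $\eta_k^2$ in (\ref{eq:eta}) immediately yields (\ref{eq:eta_g}). For part (3), I would substitute $(i,j,k,l)\mapsto(i,k,j,k)$ into (\ref{eq:u_g}) to obtain $\AC(u_{ik},u_{jk})=2\pi\int_{-\pi}^{\pi}\{g_{ij}(\omega)\overline{g_{kk}(\omega)}+g_{ik}(\omega)\overline{g_{kj}(\omega)}\}\,d\omega$, and plug this into (\ref{eq:Sigma}) to get (\ref{eq:Sigma_g}).

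For part (1), joint asymptotic normality and zero asymptotic bias of $(\mathbf{l},\mathbf{a}_k)$ are inherited from the joint normality of $\{u_{ij}\}$ via Corollary \ref{coro:g} combined with the linearisations in (\ref{eq:dw}). Indeed, $\mathbf{D}=\diag(d_i)$ with $d_i=u_{ii}+o(1)$, and $\mathbf{a}_k-\bm{\phi}_k=n^{-1/2}\sum_{i\neq k}v_{ik}\bm{\phi}_i+o(n^{-1/2})=n^{-1/2}\sum_{i\neq k}u_{ik}(\lambda_k-\lambda_i)^{-1}\bm{\phi}_i+o(n^{-1/2})$ through the identity $\mathbf{H}=\mathbf{\Phi}\mathbf{V}$. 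Since these are (to leading order) linear functionals of the asymptotically centred Gaussian vector $\{u_{ij}\}$, their joint limit is Gaussian with mean zero, which is exactly the claim.

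I do not expect a genuine obstacle, since the corollary is essentially a bookkeeping consequence of Theorem \ref{the} under the vanishing-cumulant simplification. The one place that needs a bit of care is checking, in the derivation of $\mathbf{\Sigma}_k$, that the two terms in the Hermitian-symmetric template $g_{ik}(\omega)\overline{g_{jl}(\omega)}+g_{il}(\omega)\overline{g_{jk}(\omega)}$ really match the stated pair $g_{ij}(\omega)\overline{g_{kk}(\omega)}+g_{ik}(\omega)\overline{g_{kj}(\omega)}$ after the substitution $(i,j,k,l)\mapsto(i,k,j,k)$; this is a straightforward verification using the Hermitian property $\overline{g_{ab}(\omega)}=g_{ba}(-\omega)$ of the spectral density.
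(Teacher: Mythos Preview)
Your proposal is correct and follows exactly the route the paper takes: the paper simply notes that Gaussianity forces $\kappa_{abcd}=0$, reducing (\ref{eq:u}) to (\ref{eq:u_g}), and then states the corollary as an immediate specialisation of Theorem~\ref{the}. Your index substitutions $(i,j,k,l)\mapsto(i,i,j,j)$ and $(i,j,k,l)\mapsto(i,k,j,k)$ reproduce the stated formulas directly, so the closing remark about needing the Hermitian identity $\overline{g_{ab}(\omega)}=g_{ba}(-\omega)$ is unnecessary---the match is literal after substitution, with no further manipulation required.
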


\section{Special cases for Gaussian processes}\label{sec:special}

As we have said, generally speaking, there is temporal spatial dependence in $\textbf{Y}_t$. However,
in some special cases, $\textbf{Y}_t$ has simpler dependence structure, thus leading to neater results. 
We discuss two such cases for Gaussian processes here.
\begin{enumerate}
\item\label{itema} All $\mathbf{\Gamma}_{\textbf{Y}}(s)$ are diagonal, which means $\textbf{Y}_t$ are independently evolving time series, i.e., there is only temporal dependence in $\textbf{Y}_t$. It follows that $g(\omega)$ is diagonal.
\item\label{itemb} $\mathbf{\Gamma}_{\textbf{Y}}(0)$ is diagonal and all other $\mathbf{\Gamma}_{\textbf{Y}}(s)$ are zero, which means $\textbf{Y}_t$ are independent vectors, i.e., there is no temporal or spatial dependence in $\textbf{Y}_t$.
\end{enumerate}
The first case has been studied by \citet{taniguchi1987asymptotic}, although it is difficult to envisage a  real situation where   $\mathbf{\Gamma}_{\textbf{Y}}(s)$ is diagonal for {\it {all}} $s$.
Perhaps, this case could be viewed as 
an approximation of Theorem \ref{the}, i.e., substituting $g(\omega)$ by its diagonalized version.
The second case covers the classical PCA for independent data $\mathbf{X}_t$.
It is developed for Gaussian data and has been extensively studied. See, e.g., \citet{anderson2003}. 
To be consistent, we state the asymptotic results for Gaussian processes under (\ref{itema}) and (\ref{itemb}) in Corollary \ref{coro:dag} and \ref{coro:ind}, respectively.

\begin{corollary}
Let $\mathbf{X}(t)$ be a Gaussian process that satisfies the assumptions of Theorem \ref{the}, and suppose all $\mathbf{\Gamma}_{\textbf{Y}}(s)$ are diagonal. Then, in (\ref{eq:u}), $\AC(u_{ij},u_{kl})$ is zero except when $i=j=k=l$, $i=k\neq j=l$ or $i=l\neq k=j$,
\begin{align*}
    \AC(u_{ii},u_{ii})=4 \pi \int_{-\pi}^{\pi}g_{i i}^2(\omega) d \omega,\quad
    \AC(u_{ij},u_{ij})=\AC(u_{ij},u_{ji})=2 \pi \int_{-\pi}^{\pi}g_{i i}(\omega) {g_{j j}(\omega)} d \omega,\quad i\neq j.
\end{align*}
which means the elements of $\mbf{U}$ are asymptotically independent.
Thus, the following asymptotic results hold for $k=1,2,...,p$.
\begin{enumerate}[(1)]

\item Jointly, $\mathbf{l}$ and $\mathbf{a}_k$ are asymptotically normally distributed and asymptotically unbiased.

\item For $\mbf{l}$, $\sqrt{n}(\mathbf{l}-\bm{\lambda})\Longrightarrow \mathcal{N}(0, \mathbf{B})$, where
$b_{ij}=
4 \pi \int_{-\pi}^{\pi}g_{ii}^2(\omega)d \omega\mathds{1}(i=j)$.
Thus, (\ref{eq:eta}) is now
\begin{equation*}
    \sqrt{n}({r}_k-{\gamma}_k)\Longrightarrow \mathcal{N}(0,\eta_k^2),\quad 
    where\;\; \eta_k^2=\sum_{i=1}^p 4 \pi \int_{-\pi}^{\pi}g_{ii}^2(\omega)d \omega\beta_i^2(k).
\end{equation*}
\item  For $\sqrt{n}(\mbf{a}_k-\bm{\phi}_k)\Longrightarrow \mathcal{N}(0,\mbf{\Sigma}_k)$, (\ref{eq:Sigma}) is now
\begin{align*}
    \mbf{\Sigma}_k=\sum_{i=1, i \neq k}^{p} \frac{2 \pi \int_{-\pi}^{\pi}{g}_{kk}(\omega){g}_{ii}(\omega)d \omega}{\left(\lambda_{k}-\lambda_{i}\right)^{2}}\bm{\phi}_{i} \bm{\phi}^T_{i}.
\end{align*}

\item  All of the $l_k$ are asymptotically independent of all of the $\mbf{a}_k$.
\end{enumerate}
\label{coro:dag}
\end{corollary}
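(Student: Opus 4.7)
The plan is to specialize Corollary \ref{coro:general} to the diagonal case, which reduces to a book-keeping argument on which index patterns make the integrals in (\ref{eq:u_g}) survive. First I would observe that, since $\mathbf{\Gamma}_{\textbf{Y}}(s)$ is diagonal for every $s$, the spectral density $g(\omega)=(2\pi)^{-1}\sum_s e^{-\mathrm{i}s\omega}\mathbf{\Gamma}_{\textbf{Y}}(s)$ is diagonal too, so that $g_{ij}(\omega)\equiv 0$ whenever $i\neq j$. The diagonal entries $g_{ii}(\omega)$ are real (they are spectral densities of real stationary scalar series), which lets me drop complex conjugates on them later.

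Next, I would plug this into the Gaussian simplification (\ref{eq:u_g}). In the product $g_{ik}(\omega)\overline{g_{jl}(\omega)}$ the integrand vanishes unless $i=k$ and $j=l$, and similarly $g_{il}(\omega)\overline{g_{jk}(\omega)}$ survives only when $i=l$ and $j=k$. A short case split on $(i,j,k,l)$ gives exactly three nonvanishing patterns: (a) $i=j=k=l$, in which both terms contribute $\int g_{ii}^2(\omega)d\omega$, giving the factor $4\pi$; (b) $i=k\neq j=l$, in which only the first term survives; (c) $i=l\neq k=j$, in which only the second term survives. Reality of $g_{ii}$ and $g_{jj}$ then yields the two displayed formulas, and all other covariances vanish. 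In particular, this means that $\{u_{ij}\}$ form an asymptotically uncorrelated Gaussian family, hence asymptotically independent.

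With the covariance structure of $\{u_{ij}\}$ in hand, the four asymptotic conclusions follow from Theorem \ref{the} by direct substitution. For item (2), $b_{ij}=\AC(u_{ii},u_{jj})$ corresponds to the index pattern $(i,i,j,j)$, which matches none of (a)--(c) when $i\neq j$, so $\mathbf{B}$ is diagonal with $b_{ii}=4\pi\int g_{ii}^2(\omega)d\omega$, and the expression for $\eta_k^2$ in (\ref{eq:eta}) collapses to a single sum over $i$. For item (3), in $\AC(u_{ik},u_{jk})$ with $i,j\neq k$ the index pattern $(i,k,j,k)$ fits case (b) exactly when $i=j$, killing all off-diagonal $(i,j)$ terms in the double sum of (\ref{eq:Sigma}) and leaving the stated single sum with denominator $(\lambda_k-\lambda_i)^2$. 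For item (4), the linearization (\ref{eq:dw}) writes $l_k-\lambda_k$ as $u_{kk}/\sqrt{n}+o(n^{-1/2})$ and $\mathbf{a}_k-\bm{\phi}_k$ as $\sum_{i\neq k}u_{ik}/\{\sqrt{n}(\lambda_k-\lambda_i)\}\bm{\phi}_i+o(n^{-1/2})$, so asymptotic independence reduces to checking $\AC(u_{kk},u_{ik})=0$ for $i\neq k$; the pattern $(k,k,i,k)$ fails all three of (a)--(c), and joint Gaussianity then upgrades uncorrelatedness to independence.

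The only step that needs any care is the bookkeeping of which $(i,j,k,l)$ patterns contribute, especially the two ``off-diagonal'' patterns $i=k\neq j=l$ and $i=l\neq k=j$, which must not be double-counted when $i=j=k=l$; this is exactly why case (a) picks up the prefactor $4\pi$ rather than $2\pi$. Everything else is substitution into Theorem \ref{the} together with the delta-method identity already used there, and the asymptotic unbiasedness in (1) is inherited verbatim from Corollary \ref{coro:general}.
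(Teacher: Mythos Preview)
Your proposal is correct and matches the paper's implicit approach: the paper states Corollary~\ref{coro:dag} without a separate proof, treating it as an immediate specialization of Corollary~\ref{coro:general} (the Gaussian case of Theorem~\ref{the}) under the additional hypothesis that $g(\omega)$ is diagonal, which is exactly the route you take. Your index bookkeeping on (\ref{eq:u_g}) is accurate, including the $4\pi$ versus $2\pi$ distinction.

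One small refinement for item~(4): the statement asserts independence of \emph{all} $l_m$ from \emph{all} $\mathbf{a}_k$, not just matching indices, so you should check $\AC(u_{mm},u_{ik})=0$ for every $m$ and every $i\neq k$; the pattern $(m,m,i,k)$ with $i\neq k$ still fails all three of your cases (a)--(c), so your argument extends verbatim.
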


\begin{corollary}
Let $\mathbf{X}(t)$ be a Gaussian process that satisfies the assumptions of Theorem \ref{the}.
Suppose $f(\omega)= \mathbf{\Gamma}/(2\pi)$, which means $\mathbf{X}(t)$ is i.i.d. and $\mathcal{N}(0,\mathbf{\Gamma})$, and it follows that $g(\omega)= \mathbf{\Lambda}/(2\pi)$.
Then, in (\ref{eq:u}), $\AC(u_{ij},u_{kl})$ is zero except when $i=j=k=l$, $i=k\neq j=l$ or $i=l\neq k=j$,
\begin{align*}
    \AC(u_{ii},u_{ii})=2 \lambda_i^2,\quad
    \AC(u_{ij},u_{ij})=\AC(u_{ij},u_{ji})=\lambda_i\lambda_j,\quad i\neq j,
\end{align*}
which means the elements of $\mbf{U}$ are asymptotically independent.
Thus, the following asymptotic results hold for $k=1,2,...,p$.
\begin{enumerate}[(1)]
\item Jointly, $\mathbf{l}$ and $\mathbf{a}_k$ are asymptotically normally distributed and asymptotically unbiased.

\item For $\mbf{l}$, $\sqrt{n}(\mathbf{l}-\bm{\lambda})\Longrightarrow \mathcal{N}(0, \mathbf{B})$, where
$b_{ij}=
2 \lambda_{i}^{2}\mathds{1}(i=j)$.
Thus, (\ref{eq:eta}) is now
\begin{equation*}
    \sqrt{n}({r}_k-{\gamma}_k)\Longrightarrow \mathcal{N}(0,\eta_k^2),\quad 
    where \;\; \eta_k^2=\sum_{i=1}^p 2 \lambda_{i}^{2}\beta_i^2(k).
\end{equation*}

\item  For $\sqrt{n}(\mbf{a}_k-\bm{\phi}_k)\Longrightarrow \mathcal{N}(0,\mbf{\Sigma}_k)$, (\ref{eq:Sigma}) is now
\begin{align*}
    \mbf{\Sigma}_k=\sum_{i=1, i \neq k}^{p} \frac{{\lambda}_{k}{\lambda}_{i}}{\left(\lambda_{k}-\lambda_{i}\right)^{2}}\bm{\phi}_{i} \bm{\phi}^T_{i}.
\end{align*}

\item  All of the $l_k$ are asymptotically independent of all of the $\mathbf{a}_k$.
\end{enumerate}
\label{coro:ind}
\end{corollary}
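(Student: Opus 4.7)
The plan is to specialize the Gaussian formula (\ref{eq:u_g}) from Corollary \ref{coro:g} to this i.i.d.\ setting and then feed the resulting covariances into Theorem \ref{the}. The i.i.d.\ hypothesis gives $f(\omega)=\mathbf{\Gamma}/(2\pi)$, so in the principal-component basis $g(\omega)=\mathbf{\Phi}^T f(\omega)\mathbf{\Phi}=\mathbf{\Lambda}/(2\pi)$ is constant in $\omega$ and diagonal, i.e.\ $g_{ij}(\omega)=\lambda_i\delta_{ij}/(2\pi)$ with $\delta_{ij}$ the Kronecker delta. Substituting into (\ref{eq:u_g}), both terms of the integrand are constant in $\omega$, and the integral evaluates to $\AC(u_{ij},u_{kl})=\lambda_i\lambda_j(\delta_{ik}\delta_{jl}+\delta_{il}\delta_{jk})$. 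Reading off the three non-vanishing index patterns yields the values $2\lambda_i^2$ for $i=j=k=l$ and $\lambda_i\lambda_j$ for the two cases $i=k\neq j=l$ and $i=l\neq k=j$, as claimed. Joint asymptotic normality of $\{u_{ij}\}$ combined with this covariance structure immediately gives asymptotic independence of the (distinct, up to symmetry) entries of $\mathbf{U}$.

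Next I would assemble items (1)--(3). Theorem \ref{the} supplies the linearizations $d_i=u_{ii}+o(1)$ and $v_{ij}=u_{ij}/(\lambda_j-\lambda_i)+o(1)$ for $i\neq j$, a.s.\ as $n\to\infty$. Since $\{u_{ij}\}$ is jointly asymptotically normal with zero mean, the same holds jointly for $\mathbf{l}$ and $\mathbf{a}_k$, giving (1). For (2), I would set $b_{ij}=\AC(u_{ii},u_{jj})=2\lambda_i^2\delta_{ij}$ and plug into (\ref{eq:eta}), collapsing the double sum for $\eta_k^2$ to the stated single sum over $i$. For (3), I would substitute $\AC(u_{ik},u_{jk})$ into (\ref{eq:Sigma}); on the summation region $i,j\neq k$ the Kronecker term $\delta_{ik}\delta_{kj}$ is forced to vanish, leaving only $\lambda_i\lambda_k\delta_{ij}$, which collapses the double sum to $\sum_{i\neq k}\lambda_k\lambda_i(\lambda_k-\lambda_i)^{-2}\bm{\phi}_i\bm{\phi}_i^T$, matching the statement.

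For (4) it suffices to show that $l_k$, which is asymptotically a function of $u_{kk}$, is uncorrelated with every coordinate of $\mathbf{a}_k-\bm{\phi}_k$, which is asymptotically a linear combination of $\{u_{ik}: i\neq k\}$. A direct calculation from the formula above gives $\AC(u_{ii},u_{jk})=2\lambda_i\lambda_k\delta_{ij}\delta_{ik}$, which vanishes whenever $j\neq k$; hence the diagonal and off-diagonal entries of $\mathbf{U}$ are asymptotically uncorrelated, and joint Gaussianity upgrades this to independence. I do not anticipate a genuine obstacle: the corollary is a direct specialization of the preceding machinery. The only care needed is in the Kronecker delta bookkeeping---specifically, ensuring that the $\delta_{ik}\delta_{kj}$ term in (3) is ruled out by the summation range, and that the passage from ``uncorrelated'' to ``independent'' in (1) and (4) is explicitly attributed to the joint Gaussian limit rather than tacitly assumed.
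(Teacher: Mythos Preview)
Your proposal is correct and matches the paper's treatment: the corollary is stated without a separate proof, being the immediate specialization of Theorem \ref{the} and Corollary \ref{coro:general} obtained by substituting $g(\omega)=\mathbf{\Lambda}/(2\pi)$ into (\ref{eq:u_g}) and then plugging the resulting $\AC(u_{ij},u_{kl})$ into (\ref{eq:eta}) and (\ref{eq:Sigma}). One small wording fix for (4): the claim is that \emph{every} $l_m$ is asymptotically independent of \emph{every} $\mathbf{a}_k$, not just the pair with matching index, but your general computation $\AC(u_{ii},u_{jk})=2\lambda_i^2\delta_{ij}\delta_{ik}=0$ for $j\neq k$ already establishes this.
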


We compare the results of Corollary \ref{coro:general}, \ref{coro:dag} and \ref{coro:ind} for Gaussian processes in Table \ref{table:sec2}.
\begin{table}
\caption{\label{table:sec2}Comparisons among Corollary \ref{coro:general}, \ref{coro:dag} and \ref{coro:ind} for Gaussian processes. A \cmark  means yes and a \xmark  means no.}
\centering
\footnotesize
\begin{tabular}{l|lll}
\toprule
    Corollary & \ref{coro:general} & \ref{coro:dag} & \ref{coro:ind}\\\hline
     \textbf{Dependence in $\mathbf{Y}_t$ }& & &\\
    Temporal & \cmark & \cmark & \xmark\\
    Spatial& \cmark & \xmark & \xmark\\\hline
    \textbf{Asymptotic properties }& & &\\
    Joint Gaussianity of $\mathbf{l}$ and $\mathbf{a}_k$ & \cmark & \cmark & \cmark\\
    Unbiasedness of $\mathbf{l}$ and $\mathbf{a}_k$ & \cmark & \cmark & \cmark\\
    Independence among $l_k$ & \xmark & \cmark & \cmark\\
    Independence between all $l_k$ and $\mathbf{a}_k$ & \xmark & \cmark & \cmark\\
    \bottomrule
    \end{tabular}
\end{table}
Intuitively, a simpler dependence structure would bring benefits such as asymptotic independence or simpler forms of the asymptotic covariance. 
It should be noted that Corollaries \ref{coro:general}, \ref{coro:dag} and \ref{coro:ind} study the same estimate $\mathbf{l}$, ${r}_k$ and $\mathbf{A}$, but give different asymptotics. 
An essential question is how these different dependence structures affect the inference of the number and loadings of PCs in practice.
Especially, we want to know the consequences of making inference under the assumption of independent observations (Corollary \ref{coro:ind}) when the real data is time series (Corollary \ref{coro:general}). This situation is possibly the most common misuse of PCA for time series.
Thus, we will discuss the estimation of the asymptotic covariance of eigenvalues and eigenvectors in the next Section, which is essential for the inference of the number and loadings of PCs.

\section{Estimation of the asymptotic covariance of eigenvalues and eigenvectors}\label{sec:numerical}
To conduct inference on the eigenvalues, eigenvectors and proportion of variation, an intuitively plausible approach is to produce directly consistent estimates of $\mbf{B}$, $\mbf{\Sigma}_k$ and $\eta_k^2$ from  their analytical forms in (\ref{eq:eta}) and (\ref{eq:Sigma}).
The main challenge lies in the second term of $\AC(u_{ij},u_{kl})$ in (\ref{eq:u}). 
In principle, this can be handled by the estimation of the integral of the fourth-order cumulant spectra as similar challenges often appear in the asymptotic covariances of quantities in time series analysis, such as spectral mean statistics \citep{shao2009confidence}, whittle estimation \citep{giraitis2001whittle} and so on. 
Now, this was the approach adopted by \cite{taniguchi1996nonparametric}.
However, their proposed estimator turns out to be computationally complex: its consistency requires the existence of the eighth order moment
and the procedure involves a choice of a smoothing parameter with no theoretical guidance.
As a result, in  the literature, researchers have tended to avoid direct estimation of this term and preferred more appealing alternatives, such as the self-normalization approach \citep{shao2009confidence} and the bootstrap method \citep{meyer2020extending}.
Thus, we will only implement the direct estimation when $\kappa_{abcd}$ is negligible, e.g., Gaussian processes, under which case  Corollary \ref{coro:general} can be applied.
Simulation results in the next section will show that this method has some scope of application for some light-tailed data.

Our experience suggests that a data-dependent 
bootstrap method can alleviate the  troublesome aspects mentioned earlier and can be applied to general processes beyond the Gaussian.

\subsection{Direct estimation}

For a given Gaussian process, to estimate the asymptotic covariances $\mbf{B}$, $\mbf{\Sigma}_k$ and $\eta_k^2$ in Corollary \ref{coro:general}, we need to estimate $\mbf{\Phi},\mbf{\Lambda}$ and $g(\omega)$.
For $\mbf{\Phi}$ and $\mbf{\Lambda}$, natural estimators are $\mbf{A}$ and $\mbf{L}$. 
For the spectral density matrix $f(\omega)$, define the raw periodogram $\tilde{f}(\omega)$ and the smoothed periodogram $\hat{f}(\omega)$ as 
\begin{equation*}
    \tilde{f}(\omega)=(2\pi)^{-1}\sum_{s=-(n-1)}^{n-1}\exp\{-\mathrm{i}s\omega\}\hat{\mathbf{\Gamma}}(s),\quad
    \hat{f}(\omega)=\sum_{m=-M}^M W_M(m)\tilde{f}(\omega+m/n),
\end{equation*}
where $\hat{\mathbf{\Gamma}}(s)=(1/n) \sum_{t=1}^n (\mathbf{X}(t+s)-\bar{\mathbf{X}})(\mathbf{X}(t)-\bar{\mathbf{X}})^T$, $M$ is the bandwidth, and $W_M(m)$ is the spectral window function.
For example, $W_M(m)=\mathds{1}(|m|\leq M)/(2M+1)$ is the Daniell window.
It is known that $\hat{f}(\omega)$ is a consistent estimator. 
Thus, the estimator of $g(\omega)$ is $\hat{g}(\omega)=\mbf{A}^T\hat{f}(\omega)\mbf{A}$. 
Then, we obtain the following estimators of $\mbf{B}$, $\mbf{\Sigma}_k$ and $\eta_k^2$,
\begin{align*}
    &\hat{b}_{ij}=4 \pi \int_{-\pi}^{\pi}|\hat{g}_{ij}(\omega)|^2,\quad
    \hat{\eta}_k^2 =\sum_{i=1}^p\sum_{j=1}^p4 \pi \int_{-\pi}^{\pi}|\hat{g}_{ij}(\omega)|^2d \omega \hat{\beta}_i(k)\hat{\beta}_j(k),\\
    &\hat{\mbf{\Sigma}}_k=\sum_{i=1,i \neq k}^p  \sum_{j=1,j \neq k}^p 
    \frac{2 \pi \int_{-\pi}^{\pi}\left\{\hat{g}_{i j}(\omega) \overline{\hat{g}_{kk}(\omega)}+\hat{g}_{i k}(\omega) \overline{\hat{g}_{kj}(\omega)}\right\} d \omega}{(l_k-l_i)(l_k-l_j)}\mbf{a}_i\mbf{a}_j^T,
\end{align*}
with $\hat{\beta}_i(k)=(\mathds{1}(i\leq k)-{r}_k)/\sum_{j=1}^p l_j$.

Similar procedures can be applied to obtain the direct estimation of the asymptotic covariances $\mbf{B}$, $\mbf{\Sigma}_k$ and $\eta_k^2$ in Corollary \ref{coro:dag} and \ref{coro:ind}.

\subsection{Bootstrap method}

Our second method is to resample and derive the bootstrap covariance estimator, avoiding the difficulty and complexity of estimating the second term in (\ref{eq:u}). 
It is applicable to general processes, especially non-Gaussian and heavy-tailed processes. 
Various bootstrap methods for dependent data have been explored by researchers, including block bootstrap, sieve bootstrap, bootstrap in frequency domain, and others. Comprehensive reviews of bootstrap for dependent data can be found in \cite{buhlmann2002}, \cite{lahiri2003} and \cite{politis2003}.

Our experiences with the surrogate data method, the time frequency toggle method and the moving block bootstrap have narrowed the choice to the last method. The first two methods are in the frequency domain.
The idea of the surrogate data method from \cite{theiler1992} is to bootstrap the phase of the Fourier coefficients but keep their magnitude unchanged. 
This method has limited validity because every surrogated sample has exactly the same periodogram (and mean) as the original series. 
For this reason, this method fails for our problem since it will generate exactly the same eigenvectors.

The time frequency toggle method was proposed by \cite{kirch2011}, whose basic idea is to bootstrap the Fourier coefficients of the observed time series, and then to back-transform them to obtain a bootstrap sample in the time domain. 
It is more general than the surrogate data method in that it can capture the distribution of statistics based on the periodogram, and thus can work for our problem. 
However, our simulation results show that its performance for heavy-tailed data is not good. 
The reason may lie in the fact that the sample paths of this method are asymptotically Gaussian, due to the Fourier coefficients being asymptotically so. This seems inevitable for almost all methods using discrete Fourier transforms.

The moving block bootstrap (MBB) originated in \cite{hall1985} and \cite{carlstein1986}, and was further developed in \cite{kunsch1989}, \cite{liu1992} and others for stationary observations.
It resamples blocks of (consecutive) observations at a time. 
As a result, the dependence structure of the original observations is preserved within each block. 
To explain briefly, let us consider an  observed univariate time series $X(1),...,X(n)$,
for a parameter $\theta$ which is a functional of the $j$-dimensional marginal distribution of the time series, we consider the $j$-dimensional vectors
\[
\mbf{S}(t)=(X(t),...,X(t+j-1)), \quad t=1,...,n-j+1.
\]
For fixed block size $\rho$, define the blocks in terms of $\mbf{S}(t)$ as
\[\Xi_t=(\mbf{S}(t),...,\mbf{S}(t+\rho-1)),\quad t=1,..,n+2-\rho-j.\]
For $\tau \geq 1$, select $\tau$ blocks randomly from the collection $\{\Xi_t,t=1,..,n+2-\rho-j \}$, and align them to generate the MBB observations 
$\{\mbf{S}^*(1),...\mbf{S}^*(\rho);\mbf{S}^*(\rho+1),...,\mbf{S}^*(2\rho);...,\mbf{S}^*(\tau\rho)\}$. 
And $\tau$ should be the smallest integer such that $\tau\rho \geq n$.
Then, the estimator $\hat{\theta}$ is derived based on the MBB observations.

It can be seen that the above procedure can be easily modified for multivariate time series, so it can be applied to our PCA problem.
However, the eigenvectors depend on the entire distribution of the process, so the question of decision of $j$ arises.
An ad-hoc solution is to work with the naive block bootstrap (using $j = 1$). 
This will inevitably result in some efficiency loss compared to the ordinary block bootstrap, but the simulation in the next section suggests that the numerical performance is good and acceptable. 
So, we will use MBB for our problem, while also leaving the door open for other bootstrap methods.

\section{Simulation}\label{sec:simu}

In this section, we will assess, via simulation,
the asymptotics and estimation methods in above sections.
Instead of focusing on the eigenvalues, eigenvectors and the proportion of variation ($\bm{\lambda}$, $\mbf{\Phi}$ and $\bm{\gamma}$), 
our major interest is here in the asymptotic covariances of their estimates ($\mbf{l}$, $\mbf{A}$ and $\mbf{r}$), namely $\mbf{B}$, $\mbf{\Sigma}_k$ and $\eta_k^2$,
with different formulas given by Corollaries \ref{coro:general}, \ref{coro:dag} and \ref{coro:ind} for Gaussian processes.
In Section \ref{subsec1}, we will check the efficacy of these formulas.
The asymptotic distribution of Corollary \ref{coro:general} is denoted by AD.
Results based on Corollaries \ref{coro:dag} and \ref{coro:ind} are abbreviated as DAG and IND to indicate the case with diagonalized spectral density matrix and the case of independent observations, respectively.
Conducting PCA to simulated data and repeating over replications, we can get their empirical distribution, which is denoted as ED. 
By comparing the formula of $\mbf{B}$, $\mbf{\Sigma}_k$ and $\eta_k^2$ given be AD, DAG and IND respectively with the empirical covariance of $\sqrt{n}(\mathbf{l}-\bm{\lambda})$, $\sqrt{n}(\mbf{a}_k-\bm{\phi}_k)$ and $\sqrt{n}({r}_k-{\gamma}_k)$ given by ED, we can assess their performance.

In Section \ref{subsec2}, we assess the two estimation methods, i.e. the direct method (DE) and the bootstrap method (BE) introduced in Section \ref{sec:numerical}.
For each simulation replication, we estimate the standard deviation of eigenvectors using the two estimation methods and averaging over replications.
We compare them with the empirical standard deviations (ED).
Here we handle not only Gaussian processes, but also non-Gaussian ones, including some with outliers, or skewed Gaussianity, or heavy-tail.

In the following simulation, we take finite-order Vector Autoregressive (VAR) and Vector Moving Average models (VMA) as examples, which are respectively defined as
\begin{equation*}
    \mathbf{X}(t)=\mathbf{e}(t)+\sum_{j=1}^J\mathbf{F}(j) \mathbf{X}(t-j), \quad
    \mathbf{X}(t)=\mathbf{e}(t)+\sum_{j=1}^J\mathbf{G}(j) \mathbf{e}(t-j),
\end{equation*}
where $\mathbf{e}(t)$ is a white noise series and $J$ is the order.

\subsection{Efficacy of the asymptotics for Gaussian processes}\label{subsec1}

Let the length of time series be $n=5000$, the dimension be $p=5$, and the replication time be $N=2000$.
Suppose $\mathbf{e}(t)$ is from $\mathcal{N}({0}_5,10{\mathbf{I}}_5)$, with $\mathbf{I}_5$ being the identity matrix of dimension 5.
Consider the following Data Generating Processes (DGPs):

\begin{itemize}
\item \textbf{DGP 1}: VAR(1) process with multivariate Gaussian noise.

\item \textbf{DGP 2}: VMA(1) process with multivariate Gaussian noise.

\end{itemize}

Elements of $\mathbf{F}(j)$ and $\mathbf{G}(j)$ are obtained by first drawing random numbers from the uniform distribution and then transformed in order to ensure the stationarity of the VAR models and  invertibility of the VMA models. 
And they are shown in Appendix \ref{subsec:app1}.

For the eigenvalues we will check their asymptotic covariance matrix $\Cov\left(l_{k}, l_{k'}\right)\equiv b_{kk'}/n$ to show the asymptotic dependence among them.
For the eigenvectors and the proportion of variation, we will check the asymptotic standard deviation of each element, $\sigma({a}_{kk'})\equiv {\Sigma}_{k,k'k'}/\sqrt{n}$ and $\sigma({r}_k)\equiv\eta_k/\sqrt{n}$, for the sake of inference.
We check $\sigma({r}_k)$ for $k=1,\ldots,p-1$ since $r_p$ is always 1 and thus $\sigma({r}_p)$ is always 0.
Note that they all have an empirical entry (ED), and three asymptotic entries (AD, DAG and IND), which will be marked in the subscript.

For $\sigma({r}_k)$, we compute the difference ($\Delta$) of AD with respect to ED in percentage as
\begin{equation}\label{eq:deltar}
\Delta_{\mathrm{{AD}}}({r}_k)=\sigma_{\mathrm{AD}}({r}_k)-\sigma_{\mathrm{ED}}({r}_k).
\end{equation}
For $\sigma({a}_{kk'})$, to avoid the influence of scale, we compute the difference ratio ($\Delta^*$) of AD with respect to ED in percentage as
\begin{equation}
\Delta^*_{\mathrm{{AD}}}({a}_{kk'})=\sigma_{\mathrm{AD}}({a}_{kk'})/\sigma_{\mathrm{ED}}({a}_{kk'})-1.
\label{eq:deltaa}
\end{equation}
Similarly, we can compute $\Delta_{\mathrm{{DAG}}}({r}_k)$, $\Delta_{\mathrm{{IND}}}({r}_k)$, $\Delta^*_{\mathrm{{DAG}}}({a}_{kk'})$ and $\Delta^*_{\mathrm{{IND}}}({a}_{kk'})$.

For each model, the vectorized $\Cov\left(l_{k}, l_{k'}\right)$ and $\Delta^*({a}_{kk'})$ for $k,k'=1,...,5$ are each a $25$-dimensional vector. 
See their plots in Fig.\ref{fig3} for DGP 1 and 2 respectively.
\begin{figure}
\centering
\makebox{\includegraphics[width=0.8\textwidth]{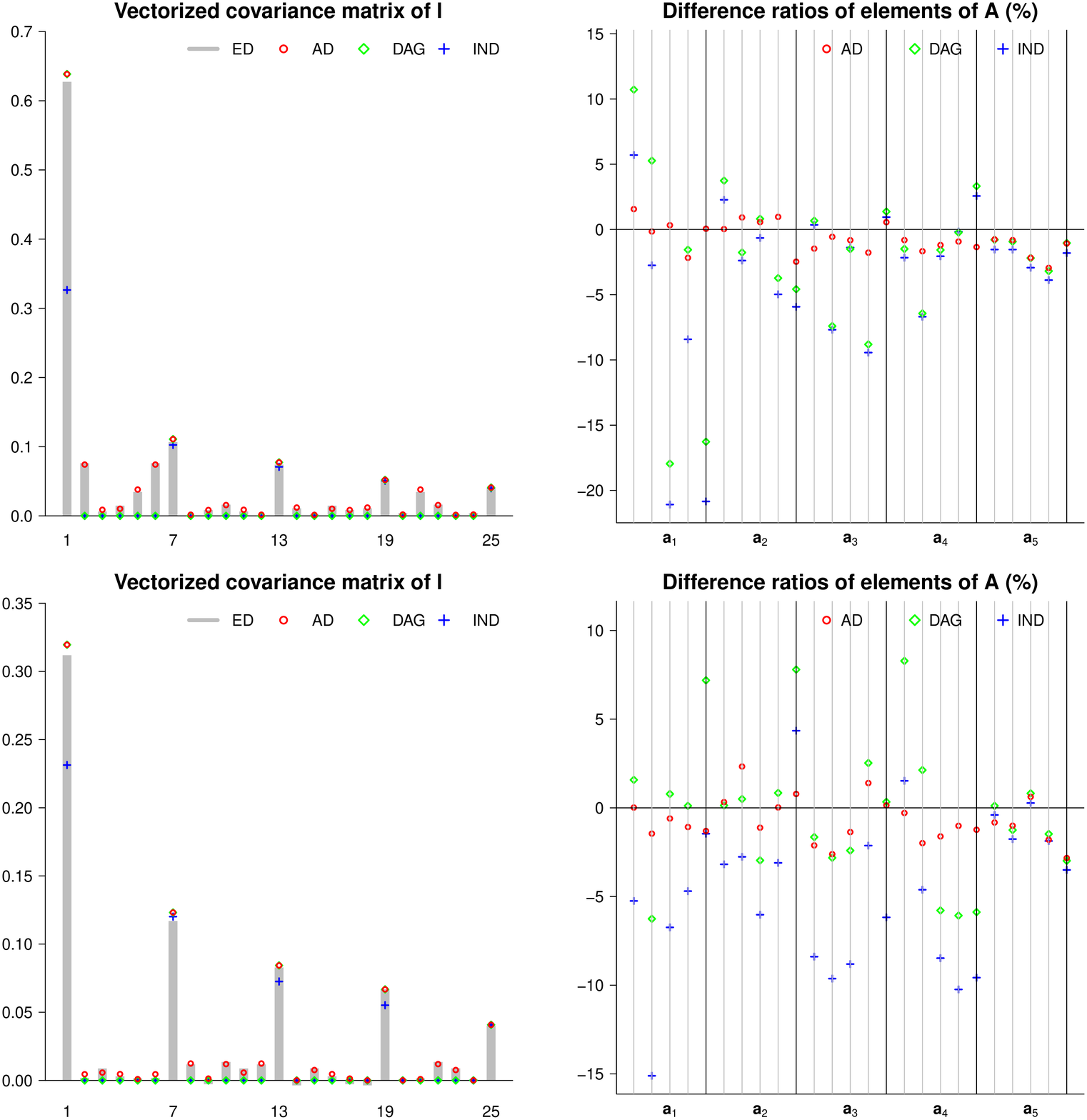}}
\caption{The vectorized $\Cov\left(l_{k}, l_{k'}\right)$ and $\Delta^*({a}_{kk'})$ for $k,k'=1,...,5$ for DGP 1 (Top pane) and DGP 2 (Bottom pane).} 
\label{fig3}
\end{figure}
For the vectorized $\Cov\left(l_{k}, l_{k'}\right)$, the 1st, 7th, 13rd, 19th and 25th elements are $\Var\left(l_{k}\right)\equiv\Cov\left(l_{k}, l_{k}\right)$, and other elements are $\Cov\left(l_{k}, l_{k'}\right)$ for $k \neq k'$.  
From the figure, we can see AD fits ED well. IND tends to give a smaller $\Var\left(l_{k}\right)$, while DAG gives the same $\Var\left(l_{k}\right)$ as AD, both cases being consistent with the theoretical results.
However, for $\Cov\left(l_{k}, l_{k'}\right), k \neq k'$, IND and DAG claim that $l_{k}$ are uncorrelated, which is not supported by ED.

For $\sigma({a}_{kk'})$, AD performs pretty well and the difference ratios are controlled within 5\%. In contrast, IND and DAG have poor performance; the difference ratio can even reach a figure as big as 20\% for DGP 1.

As for the proportion of variation, Table \ref{table:prop} shows $\Delta({r}_k)$ subscripted by AD, DAG and IND.
\begin{table}
\caption{\label{table:prop}The difference $\Delta({r}_k)$ in percentage for $k=1,...,4$ for DGP 1 and DGP 2.}
\centering
\footnotesize
\begin{tabular}{rrrrr|rrrr}
  \hline
   & \multicolumn{4}{c|}{DGP1} & \multicolumn{4}{c}{DGP2}\\
 $k$ & 1 & 2 & 3 & 4 & 1 & 2 & 3 & 4 \\
  \hline
AD & -0.015 & -0.011 & -0.008 & -0.000 & 0.009 & 0.010 & -0.002 & 0.008 \\ 
DAG & 0.045 & -0.009 & 0.031 & 0.027 & 0.009 & 0.030 & 0.013 & 0.020 \\ 
IND  & -0.125 & -0.108 & -0.018 & 0.010 & -0.063 & -0.016 & -0.018 & 0.013 \\ 
   \hline
\end{tabular}
\end{table}
Among the three asymptotic standard deviations, $\sigma_{\mathrm{AD}}({r}_k)$ is most close to $\sigma_{\mathrm{ED}}({r}_k)$, with the differences controlled within 0.02\%.
Although the differences for $\sigma_{\mathrm{DAG}}({r}_k)$ and $\sigma_{\mathrm{IND}}({r}_k)$ are larger, they are all controlled within 0.2\%, which is negligible.

We have implemented two additional  simulation DGPs, which is of higher order. See the Appendix \ref{subsec:app2}.
\begin{itemize}
    \item \textbf{DGP 3}: VMA(2) model with multivariate Gaussian noise.
    \item \textbf{DGP 4}: VMA(3) model with multivariate Gaussian noise.
\end{itemize}
To summarize, for Gaussian processes, the simulation has confirmed the efficacy of Theorem \ref{the} and indicated the deficiency of the diagonal spectral density matrix (DAG) assumption and the independent-data (IND) assumption for time series.
On the positive side, the standard deviations of the proportion of variation have negligible difference under the three assumptions.
Thus, making interval inference under which assumption makes not much difference for determining the number of effective PCs.
However, DAG and IND perform poorly for the eigenvectors, which can seriously affect the interpretation of PC loadings.

\subsection{Efficacy of the estimation methods}\label{subsec2}

In this part, we pay attention to the estimation of the standard deviation of the eigenvectors and proportions of variation, namely $\sigma({a}_{kk'})$ and $\sigma({r}_k)$.
For each simulated sample of time length $n=2000$ and dimension $p=5$, we estimate the standard deviation by direct estimation. 
We average over $N=2000$ samples and denote the average standard deviation as $\sigma_{\mathrm{DE}}$.
For each sample, we implement bootstrap with the MBB method. 
By reference to \cite{lahiri2003}, which suggested that the optimal block size is of order $n^{1/3}$, we take the block size as $10$, and the block number is $200$ accordingly.
Then, the bootstrap estimation of standard deviation of the eigenvectors are obtained with $500$ bootstrap replications. We take average over $N=2000$ samples and denote the average standard deviation as $\sigma_{\mathrm{BE}}$.

Here we handle not only Gaussian processes, but also some non-Gaussian processes, including one process with
outliers, one skewed process and two heavy-tailed processes. The models are as follows. 
\begin{itemize}
\item \textbf{DGP 5}: VMA(1) model with multivariate Gaussian noise, but 1\% of all the noises are outliers from other multivariate Gaussian distributions.

\item \textbf{DGP 6}: VMA(1) model with noise from a multivariate skew normal distribution.

\item \textbf{DGP 7}: VMA(1) model with noise from a multivariate $t(5)$ distribution.

\item \textbf{DGP 8}: VMA(1) model with noise from a multivariate $t(8)$ distribution.
\end{itemize}
The coefficient matrix of the VMA(1) model is the same as that of DGP 2, check it and other details about the DGPs in Appendix \ref{subsec:app2}.

For the proportion of variation, similar to (\ref{eq:deltar}), we can compute $\Delta_{\mathrm{DE}}({r}_{k})$ and $\Delta_{\mathrm{BE}}({r}_{k})$ with respect to ED for $k=1,...,p$, which are shown in Table \ref{tab:madr.r} for DGPs 1-8.
It can be seen that the bootstrap estimation behaves well except that it may be slightly  sensitive to the outliers (DGP 5).
As for the direct estimation, its behavior for heavy-tailed processes (DGP 7 and 8) are worse than the bootstrap estimation.
Nevertheless, all the values in the table are controlled within 0.5\%, which is quite small in practice.

\begin{table}
\caption{\label{tab:madr.r}The $\Delta_{\mathrm{DE}}({r}_{k})$ and $\Delta_{\mathrm{BE}}({r}_{k})$ (\%) .}
\centering
    \begin{tabular}{lrrrrrrrr}
    \toprule
          & \multicolumn{2}{c}{DGP1} & \multicolumn{2}{c}{DGP2} & \multicolumn{2}{c}{DGP3} & \multicolumn{2}{c}{DGP4} \\\hline
          $k$ & \multicolumn{1}{c}{DE} & \multicolumn{1}{c}{BE} & \multicolumn{1}{c}{DE} & \multicolumn{1}{c}{BE} & \multicolumn{1}{c}{DE} & \multicolumn{1}{c}{BE} & \multicolumn{1}{c}{DE} & \multicolumn{1}{c}{BE} \\
    \midrule
    1 & -0.013 & -0.047 & 0.030 & 0.011 & 0.018 & 0.016 & 0.004 & -0.024 \\ 
  2 & 0.021 & -0.014 & 0.013 & -0.000 & 0.019 & 0.010 & 0.021 & -0.007 \\ 
  3 & 0.006 & -0.010 & 0.014 & -0.009 & -0.002 & -0.003 & 0.005 & -0.006 \\ 
  4 & 0.017 & 0.003 & 0.012 & -0.000 & -0.001 & -0.000 & 0.007 & -0.004 \\ 
    \midrule
          & \multicolumn{2}{c}{DGP5} & \multicolumn{2}{c}{DGP6} & \multicolumn{2}{c}{DGP7} & \multicolumn{2}{c}{DGP8} \\\hline
          $k$ & \multicolumn{1}{c}{DE} & \multicolumn{1}{c}{BE} & \multicolumn{1}{c}{DE} & \multicolumn{1}{c}{BE} & \multicolumn{1}{c}{DE} & \multicolumn{1}{c}{BE} & \multicolumn{1}{c}{DE} & \multicolumn{1}{c}{BE} \\
    \midrule
    1 & 0.035 & 0.252 & -0.077 & -0.015 & -0.428 & -0.088 & -0.147 & -0.041 \\ 
  2 & 0.016 & 0.182 & -0.049 & -0.012 & -0.363 & -0.067 & -0.118 & -0.022 \\ 
  3 & 0.026 & 0.137 & -0.023 & -0.010 & -0.249 & -0.028 & -0.078 & -0.010 \\ 
  4 & 0.006 & 0.044 & -0.006 & -0.006 & -0.175 & -0.038 & -0.071 & -0.020 \\
    \bottomrule
    \end{tabular}%
\end{table}%

As for the loadings, similar to (\ref{eq:deltaa}), we can compute $\Delta^*_{\mathrm{DE}}({a}_{kk'})$ and $\Delta^*_{\mathrm{BE}}({a}_{kk'})$ with respect to ED for $k,k'=1,...,p$. 
Furthermore, to condense and summarize the element-wise information, we compute the mean absolute difference ratio ($\widetilde{\Delta}$) for each eigenvector as
\[
\widetilde{\Delta}_{\mathrm{DE}}(\mbf{a}_{k})=\sum_{k'=1}^p |\Delta^*_{\mathrm{DE}}({a}_{kk'})|/p, \quad 
\widetilde{\Delta}_{\mathrm{BE}}(\mbf{a}_{k})=\sum_{k'=1}^p |\Delta^*_{\mathrm{BE}}({a}_{kk'})|/p,
\]
which can represent the accuracy of the standard deviation of the $k$th eigenvector. The $\widetilde{\Delta}$s (\%) of PC loadings for DGPs 1-8 are shown in Table \ref{tab:madr}.
\begin{table}
\caption{\label{tab:madr}The $\widetilde{\Delta}_{\mathrm{DE}}(\mbf{a}_{k})$ and $\widetilde{\Delta}_{\mathrm{BE}}(\mbf{a}_{k})$ (\%).}
\centering
    \begin{tabular}{lrrrrrrrr}
    \toprule
          & \multicolumn{2}{c}{DGP1} & \multicolumn{2}{c}{DGP2} & \multicolumn{2}{c}{DGP3} & \multicolumn{2}{c}{DGP4} \\\hline
          & \multicolumn{1}{c}{DE} & \multicolumn{1}{c}{BE} & \multicolumn{1}{c}{DE} & \multicolumn{1}{c}{BE} & \multicolumn{1}{c}{DE} & \multicolumn{1}{c}{BE} & \multicolumn{1}{c}{DE} & \multicolumn{1}{c}{BE} \\
    \midrule
    PC1 & 1.72 & 1.83 & 1.83 & 1.78 & 0.77 & 2.49 & 2.17 & 4.63 \\ 
  PC2 & 1.40 & 4.83 & 1.04 & 1.93 & 1.35 & 1.78 & 3.77 & 10.06 \\ 
  PC3 & 2.44 & 7.52 & 2.10 & 8.92 & 1.01 & 1.73 & 0.82 & 4.72 \\ 
  PC4 & 7.32 & 10.38 & 9.36 & \textbf{17.58} & 1.01 & 1.79 & 2.00 & 4.47 \\ 
  PC5 & 4.87 & 5.10 & 7.77 & 11.07 & 1.13 & 1.15 & 1.32 & 0.88 \\
    \midrule
          & \multicolumn{2}{c}{DGP5} & \multicolumn{2}{c}{DGP6} & \multicolumn{2}{c}{DGP7} & \multicolumn{2}{c}{DGP8} \\\hline
          & \multicolumn{1}{c}{DE} & \multicolumn{1}{c}{BE} & \multicolumn{1}{c}{DE} & \multicolumn{1}{c}{BE} & \multicolumn{1}{c}{DE} & \multicolumn{1}{c}{BE} & \multicolumn{1}{c}{DE} & \multicolumn{1}{c}{BE} \\
    \midrule
    PC1 & 1.89 & 9.11 & 1.34 & 1.44 & \textbf{29.02} & 3.19 & 11.70 & 2.07 \\ 
  PC2 & 1.86 & 8.42 & 1.51 & 1.33 & \textbf{29.66} & 1.36 & 11.11 & 1.81 \\ 
  PC3 & 3.19 & 12.22 & 1.25 & 2.33 & \textbf{32.39} & 6.07 & 14.65 & 4.74 \\ 
  PC4 & \textbf{17.45} & 13.09 & 2.34 & 7.82 & \textbf{30.02} & 12.08 & 13.47 & 9.84 \\ 
  PC5 & 1.12 & 7.12 & 1.45 & 9.46 & \textbf{36.47} & 11.00 & \textbf{19.48} & 5.08 \\
    \bottomrule
    \end{tabular}%
\end{table}%
From the table, it can be seen that for most models, the $\widetilde{\Delta}$s are controlled within 15\%. Although some $\widetilde{\Delta}$s exceed 15\%, such as in DGP 2 and 5, they occur in the less consequential 4th or 5th PC. 
For DGPs 1 to 6, most $\widetilde{\Delta}$s of DE are smaller than those of BE, meaning that the direct estimation method outperforms the bootstrap method. 
These models are Gaussian, skew normal or Gaussian with outliers, and are or are close to a Gaussian process. This fact might explain the success of the direct estimation method.

However, for the heavy-tailed processes in DGP 7 and 8, the direct estimation method behaves poorly. 
The tail of DGP 7 is heavier than DGP 8, and the performance is worse.
The poor performance for the heavy-tailed process is due to the joint fourth cumulant of heavy-tailed processes being non-negligible. 
Thus, we recommend the bootstrap method, which can be implemented for general processes easily. 
\section{An empirical example}\label{sec:eg}

Now, we return to principal portfolio management by studying an empirical example, the PCA of which plays a central role,  so as
to illustrate the importance of Theorem \ref{the} and the utility of the bootstrap method, and to sound a warning against cavalier usage of PCA in time series without due attention to the standard errors of the estimated eigenvectors.
In doing so, our hope is that our procedure will provide a paradigm of implementing PCA for time series that encompasses estimation, inference and interpretation.

We use a set of the daily returns of 10 stocks traded in the New York Stock Exchange.  It consists of CVX (Chevron), XOM (Exxon), AAPL (Apple), FB (Facebook), MSFT (Microsoft), MRK (Merck), PFE (Pfizer), BAC (Bank of America), JPM (JP Morgan), and WFC (Wells Fargo \& Co.). The data are from August 2, 2016 to December 30, 2016, a total of 106 days. Within the 10 stocks, CVX and XOM belong to the energy industry, while AAPL, FB and MSFT are in the information technology (IT) sector. MRK and PFE are both health companies. BAC, JPM, and WFC are in the financial sector. See, e.g., Section 4.5 of \cite{wei2018} for a fuller description of the data set.  

First, we describe the data and  check informally whether the data satisfy the basic assumptions: ergodicity, stationarity and Gaussianity. From the marginal distribution, the time series plot in Fig.\ref{figd10} and the QQ plot of the normalized data in Fig.\ref{figqq}, we can see that the time series do not appear to display obvious trend non-stationarity apart from a few possible outliers. The data appear to be non-Gaussian, and there are heavy tails. 
Thus, we consider using the bootstrap method to estimate the standard deviations, based on which we can make inference on the number and loadings of the PCs.

\begin{figure}[!htbp]
\centering
\makebox{\includegraphics[width=0.8\textwidth]{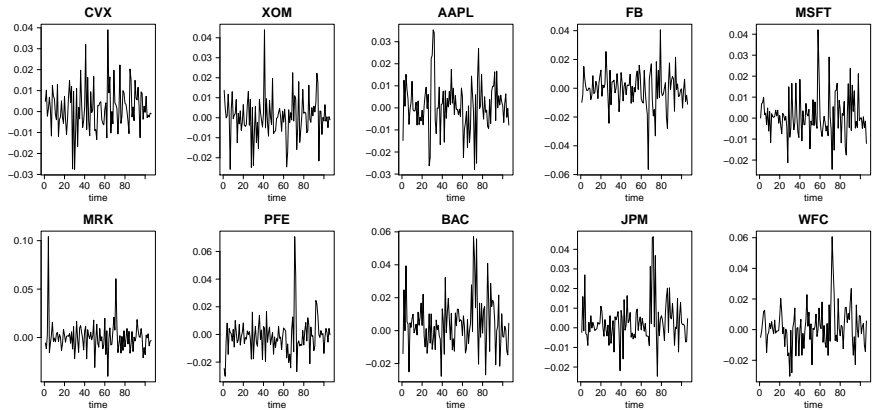}}
\caption{The time series plot of the stock returns data.} 
\label{figd10}
\end{figure}

\begin{figure}[!htbp]
\centering
\makebox{\includegraphics[width=0.8\textwidth]{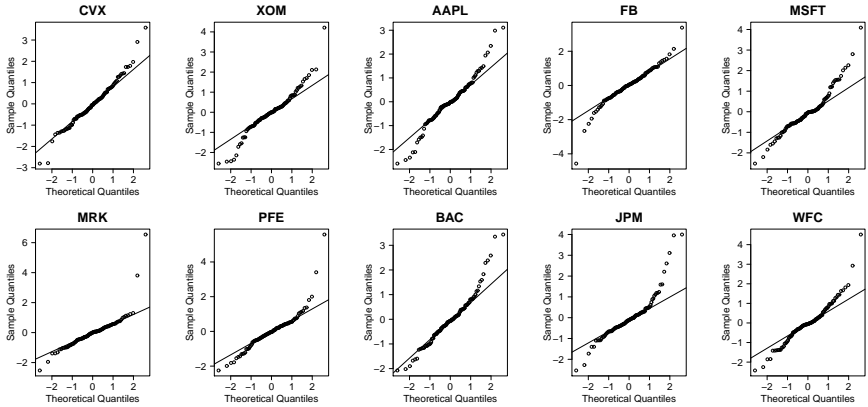}}
\caption{The QQ plot of the normalized stock returns data.} 
\label{figqq}
\end{figure}

In Fig.\ref{figpro}, we plot the proportion of variation and its 95\% confidence interval with the estimated standard deviation by the bootstrap method.
The eigenvalues of the first four components account for $r_4=77.8\%$ of the total sample variance. This value together with a visual inspection of Fig.\ref{figpro} suggests the choice of four PCs.
(In \cite{wei2018}, the first four components account for 77.5\%, and the small difference is due to a recording error of the fourth eigenvalue in \cite{wei2018}. He recorded 0.00011 but re-running his R-code we obtained 0.00013. Our value, but not his, is consistent with Figure 4.2 in the book.)

\begin{figure}[!htbp]
\centering
\makebox{\includegraphics[width=0.6\textwidth]{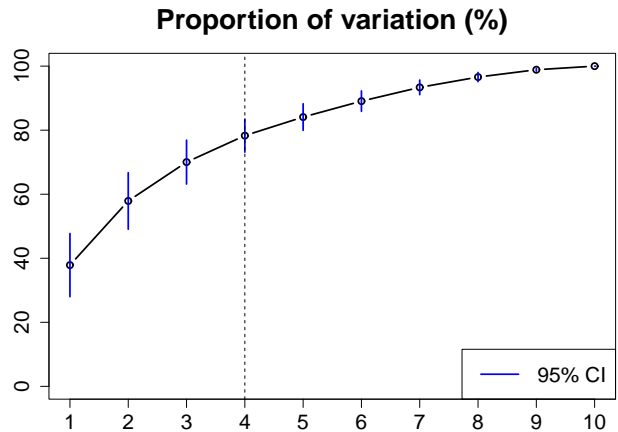}}
\caption{The proportion of variation with 95\% confidence interval.} 
\label{figpro}
\end{figure}

The PC loadings are shown in Table \ref{table2}.
The dash-lines in the tables separate the sectors.
\begin{table}
\caption{\label{table2}The PC loadings.}
\scriptsize
\begin{tabular}{rrr:rrr:rr:rrr}
  \hline
 & CVX & XOM & AAPL & FB & MSFT & MRK & PFE & BAC & JPM & WFC \\ 
  \hline
PC1 & 0.166 & 0.230 & 0.028 & 0.088 & 0.075 & 0.470 & 0.369 & 0.539 & 0.391 & 0.324 \\ 
  PC2 & 0.104 & 0.126 & 0.431 & 0.513 & 0.362 & 0.329 & 0.167 & -0.289 & -0.175 & -0.379 \\ 
  PC3 & 0.407 & 0.243 & 0.173 & 0.319 & 0.286 & -0.596 & -0.316 & 0.191 & 0.088 & 0.251 \\ 
  PC4 & 0.576 & 0.552 & -0.354 & -0.231 & -0.197 & 0.035 & 0.126 & -0.222 & -0.091 & -0.265 \\ 
  PC5 & 0.120 & 0.026 & 0.180 & -0.324 & 0.127 & 0.396 & -0.725 & 0.279 & 0.012 & -0.266 \\ 
  PC6 & 0.034 & 0.203 & 0.377 & -0.213 & -0.116 & 0.223 & -0.118 & -0.438 & -0.206 & 0.678 \\ 
  PC7 & 0.016 & -0.088 & -0.639 & 0.505 & 0.052 & 0.320 & -0.358 & -0.164 & -0.065 & 0.255 \\ 
  PC8 & 0.099 & -0.023 & 0.264 & 0.399 & -0.841 & 0.024 & -0.136 & 0.155 & -0.009 & -0.097 \\ 
  PC9 & 0.622 & -0.686 & 0.069 & -0.093 & 0.011 & 0.065 & 0.080 & -0.217 & 0.265 & 0.041 \\ 
  PC10 & 0.227 & -0.215 & -0.042 & -0.033 & 0.051 & 0.031 & 0.165 & 0.416 & -0.827 & 0.110 \\ 
   \hline
\end{tabular}
\end{table}
For every element of the first four PC loadings, $a_{kk'}$ for $k=1,\ldots,4, k'=1,\ldots, 10$, we test the null hypothesis $H_0: a_{kk'}=0$ with the estimated standard deviations in Theorem \ref{the} by the bootstrap method.
To compare, we conduct similar tests under the DAG assumption (Corollary \ref{coro:dag}) and under the IND assumption (Corollary \ref{coro:ind}), using the estimated standard deviations by the direct estimation method.
The PC loadings and inference are shown in Table \ref{table3}, where insignificant elements are suppressed (at 5\% significant level), and significant elements are shown with their signs.
The values in Table \ref{table2} are the same as that shown in \cite{wei2018} (after reversing the signs of some loadings to be consistent with our results). 
He did not conduct inference to the loadings but did a rough screening by an arbitrary  truncation in that loadings smaller than 0.1 (in absolute values) are suppressed, which are also shown in Table \ref{table3}. 

\begin{table}
\caption{\label{table3}The PC loadings with inference from Theorem \ref{the}, DAG (Corollary \ref{coro:dag}), IND(Corollary \ref{coro:ind}), and truncation in \cite{wei2018}.}
\scriptsize
\centering
    \begin{tabular}{cccc:ccc:cc:ccc}
    \toprule
          &       & CVX   & XOM   & AAPL  & FB    & MSFT  & {MRK} & PFE   & BAC   & {JPM} & WFC \\
    \midrule
    {Theorem \ref{the}} & PC1   & \textbf{+}     & \textbf{+}     &       &       &       & \textbf{+} & \textbf{+}     & \textbf{+}     & \textbf{+} & \textbf{+} \\
          & PC2   &       &       & \textbf{+}     & \textbf{+}     & \textbf{+}     &       &       &       & \large{\textbf{--}} & \large{\textbf{--}} \\
          & PC3   &       &       &       &       & \textbf{+}     & \large{\textbf{--}} & \large{\textbf{--}}     &       &       &  \\
          & PC4   & \textbf{+}     & \textbf{+}     &       &       &       &       &       &       &       &  \\
    \midrule
    {DAG} & PC1   & \textbf{+}     & \textbf{+}     &       &       &       & \textbf{+} & \textbf{+}     & \textbf{+}     & \textbf{+} & \textbf{+} \\
     (Corollary \ref{coro:dag})     & PC2   &       &       & \textbf{+}     & \textbf{+}     & \textbf{+}     & \textbf{+} &       & \large{\textbf{--}}     & \large{\textbf{--}} & \large{\textbf{--}} \\
          & PC3   & \textbf{+}     &       &       & \textbf{+}     & \textbf{+}     & \large{\textbf{--}} & \large{\textbf{--}}     & \textbf{+}     &       & \textbf{+} \\
          & PC4   & \textbf{+}     & \textbf{+}     & \large{\textbf{--}}     &       &       &       &       &      &       &  \\
    \midrule
    {IND} & PC1   & \textbf{+}     & \textbf{+}     &       &       &       & \textbf{+} & \textbf{+}     & \textbf{+}     & \textbf{+} & \textbf{+} \\
     (Corollary \ref{coro:ind})     & PC2   &       &       & \textbf{+}     & \textbf{+}     & \textbf{+}     & \textbf{+} & \textbf{+}     & \large{\textbf{--}}     & \large{\textbf{--}} & \large{\textbf{--}} \\
          & PC3   & \textbf{+}     &       &       & \textbf{+}     & \textbf{+}     & \large{\textbf{--}} & \large{\textbf{--}}     & \textbf{+}     &       & \textbf{+} \\
          & PC4   & \textbf{+}     & \textbf{+}     & \large{\textbf{--}}    &       &       &       &       & \large{\textbf{--}}     &       &  \\
    \midrule
    {Truncation } & PC1   & \textbf{+}     & \textbf{+}     &       &       &       & \textbf{+} & \textbf{+}     & \textbf{+}     & \textbf{+} & \textbf{+} \\
     \citep{wei2018}     & PC2   & \textbf{+}     & \textbf{+}     & \textbf{+}     & \textbf{+}     & \textbf{+}     & \textbf{+} & \textbf{+}     & \large{\textbf{--}}     & \large{\textbf{--}} & \large{\textbf{--}} \\
          & PC3   & \textbf{+}     & \textbf{+}     & \textbf{+}     & \textbf{+}     & \textbf{+}     & \large{\textbf{--}} & \large{\textbf{--}}     & \textbf{+}     &       & \textbf{+} \\
          & PC4   & \textbf{+}     & \textbf{+}     & \large{\textbf{--}}     & \large{\textbf{--}}     & \large{\textbf{--}}     &       & \textbf{+}     & \large{\textbf{--}}     &       & \large{\textbf{--}} \\
    \bottomrule
    \end{tabular}%
\end{table}%

As we can see from the significant loadings in Table \ref{table3}, our Theorem \ref{the} leads to the fewest number of significant elements while  the arbitrary truncation in \cite{wei2018} gives the largest number.
The results of DAG and IND are almost identical, with DAG making minimal impact.  The IND is of particular interest because it represents to-date the most frequently (mis-)used PCA for time series data, i.e., ignoring the dependence and non-Gaussianity of the observed data.
Now, the significant loadings in Table \ref{table3} lead to different interpretations of the first four PCs, as summarized in Table \ref{table4}. 
\begin{table}
\caption{\label{table4}Interpretations of the first four PCs based on Theorem \ref{the} and IND (Corollary \ref{coro:ind}).}
\centering
\begin{tabular}{rccc}
  \hline
 &\tabincell{c}{Theorem \ref{the}} &\tabincell{c}{IND (Corollary \ref{coro:ind})}&\tabincell{c}{ \cite{wei2018}}\\\hline

PC1 &\tabincell{c}{The general market\\ other than IT.} & 
\tabincell{c}{The general market\\ other than IT.}& 
\tabincell{c}{The general market\\ other than IT.}\\\hline

{PC2} &\tabincell{c}{Financial vs. {IT}.} & 
\tabincell{c}{Financial vs. {IT \& health}.}& 
\tabincell{c}{Financial vs. {non-financial}.}\\\hline

{PC3} &\tabincell{c}{Health vs. {IT}.} & 
\tabincell{c}{Health vs. {non-health}.}& 
\tabincell{c}{Health vs. {non-health}.}\\\hline

{PC4} &\tabincell{c}{ {Energy}.} & 
\tabincell{c}{ {Energy} vs. {IT \& financial}.}& 
\tabincell{c}{ {Energy} vs. {non-energy}.}\\\hline
\end{tabular}
\end{table} 
The result based on Theorem \ref{the} tends to give a sharper identification of important stocks. 
Note also that with Theorem \ref{the} significant stocks always lie within the same sector, while this is not the case with IND (Corollary \ref{coro:ind}) or with  \cite{wei2018}.
The above observations are particularly relevant for principal portfolios management mentioned in Section \ref{sec:intro} because they can help the investors to formulate more effectively their investment strategies on which stocks to long or short.
Specifically, it is interesting to point out that PC2 and PC3 based on Theorem \ref{the} are particularly relevant due to the clear contrast between the IT sector and the finance sector in PC2, and that between the IT sector and the health sector in PC3.   
Accordingly, 
the IT sector seems to be set well apart from other sectors, indicating different positions (long or short) from other sectors in the portfolio.
The different interpretations demonstrate the importance of an appropriate significance test of the loadings. 

We argue that without an appropriate test, users of PCA may be unaware of the subtle variabilities of loadings, and tend to  mis-interpret the eigenvectors. 

\section{Conclusion}\label{sec:con}
In this paper, by building on \cite{hannan1976asymptotic}, we have made explicit the asymptotic sampling properties of the eigenvalues and eigenvectors in PCA for stationary and ergodic multivariate linear time series and assessed their efficacy. 
The classical PCA results for independent data and the Taniguchi-Krishnaiah method are shown to be deficient for time series. Although they have led to simpler theoretical results, simulations have revealed their poor performance for time series data, especially for the eigenvectors. 
Thus, while the number of PCs is quite robust in respect of different dependence structure assumption, the interpretation of the principal component loadings may be questionable under these assumptions.
Direct and bootstrap method to estimate the asymptotic covariance have been given to draw inference in practice.
We argue that when applying a PCA to time series, the dependence of data should not be ignored and we have shown step by step how an appropriate inference can be conducted. 

Essentially, the shortcomings of the two defective methods for time series lie with the fact that they have ignored some information contained in cross covariances. Another way to exploit the omitted information is to adopt a frequency-domain PCA. See, e.g., \cite{brillinger1981time} and \cite{priestley1974}. However, unsolved problems remain with this approach, the possible non-causality of the principal components being a major one. They await further research.

\section*{Acknowledgements}\label{sec:ack}

We are most grateful to the Joint Editor and the referees for their constructive comments and suggestions.

\appendix

\section{Appendix}{\label{sec:app}}
In this appendix, we give more information of the simulation, including the details of DGPs and the results of the two additional simulation in Section \ref{subsec1}. 

\subsection{Details of the DGPs}\label{subsec:app1}
The coefficient matrices of the DGPs are shown as follows.

\textbf{DGP 1}: VAR(1).
\begin{align*}
\mathbf{F}(1)=&\left(
\begin{matrix}
 0.21 & -0.49 & 0.16 & -0.14 & -0.36 \\ 
 -0.25 & 0.074 & 0.38 & -0.14 & 0.047 \\ 
 -0.11 & 0.26 & 0.39 & 0.091 & 0.18 \\ 
 -0.41 & 0.37 & 0.066 & 0.37 & 0.028 \\ 
 0.46 & -0.46 & 0.094 & 0.18 & -0.41 \\ 
\end{matrix}
\right).
\end{align*}

\textbf{DGP 2, 5-8}: VMA(1).
\begin{align*}
\mathbf{G}(1)=&\left(
\begin{matrix}
    -0.46 & 0.17  & 0.23  & 0.40  & -0.22 \\
    0.15  & -0.59 & 0.05  & -0.26 & -0.24 \\
    0.13  & -0.32 & -0.26 & -0.28 & -0.41 \\
    0.15  & 0.20  & 0.51  & -0.38 & -0.55 \\
    0.43  & 0.02  & -0.25 & -0.32 & -0.34 \\
\end{matrix}
\right).
\end{align*}

\textbf{DGP 3}: VMA(2).
\begin{align*}
\mathbf{G}(1)=&\left(
\begin{matrix}
    0.62  & -0.73 & -0.02 & 0.52  & -0.52 \\
    0.71  & -0.11 & -0.19 & 0.43  & -0.75 \\
    0.09  & -0.56 & 0.79  & -0.51 & -0.08 \\
    0.14  & 0.90  & 0.07  & -0.53 & -0.27 \\
    -0.23 & -0.01 & 0.59  & 0.81  & 0.10 \\
\end{matrix}
\right),\\
\mathbf{G}(2)=&\left(
\begin{matrix}
    0.26  & -0.16 & -0.16 & -0.38 & 0.27 \\
    0.34  & -0.27 & -0.33 & -0.44 & 0.06 \\
    -0.22 & -0.18 & 0.62  & -0.08 & 0.19 \\
    0.26  & -0.30 & 0.15  & -0.24 & 0.02 \\
    0.11  & 0.21  & 0.21  & -0.14 & 0.45 \\
\end{matrix}
\right).
\label{eq:simu4}
\end{align*}

\textbf{DGP 4}: VMA(3).
\begin{align*}
\begin{split}
\mathbf{G}(1)=&\left(
\begin{matrix}
    0.94  & -0.35 & -0.49 & 0.17  & -0.18 \\
    0.58  & 0.35  & -0.43 & -0.29 & -0.36 \\
    0.42  & -0.16 & 1.07  & -0.28 & 0.39 \\
    0.59  & 0.41  & -0.38 & 0.27  & -0.03 \\
    0.18  & 0.66  & -0.28 & 0.42  & 0.91 \\
\end{matrix}
\right),\\
\mathbf{G}(2)=&\left(
\begin{matrix}
    -0.14 & -0.04 & -0.33 & 0.40  & 0.02 \\
    0.27  & -0.32 & -0.27 & 0.02  & -0.16 \\
    0.45  & -0.18 & 0.03  & -0.31 & 0.16 \\
    0.41  & 0.39  & -0.33 & -0.36 & 0.10 \\
    0.31  & 0.47  & -0.08 & 0.14  & -0.26 \\
\end{matrix}
\right),\\
\mathbf{G}(3)=&\left(
\begin{matrix}
    -0.16 & 0.24  & 0.10  & 0.22  & 0.17 \\
    -0.22 & 0.17  & 0.10  & 0.26  & 0.14 \\
    0.06  & -0.03 & -0.11 & -0.05 & -0.17 \\
    -0.11 & 0.02  & 0.01  & 0.19  & 0.11 \\
    0.14  & -0.11 & 0.16  & -0.21 & -0.24 \\
\end{matrix}
\right).
\end{split}
\end{align*}

As for the distributions of the noise, those of DGP 1-4 are shown in Section \ref{sec:simu}. Now we give details and parameters of the noise distribution of DGP 5-8. 

\textbf{DGP 5}: VMA(1) model with multivariate Gaussian noise, but 1\% of all the noises are outliers from another multivariate Gaussian distribution.

First we draw 2000 observations of $\mathbf{e}(t)$ from $\mathcal{N}({0}_5,10{\mathrm{I}}_5)$.
Then, 20 random numbers from 1 to 2000 are drawn, which will be the positions of the outliers. 
The original observations in the first 10 positions are replaced by random vectors from $\mathcal{N}({10}_5,10{\mathrm{I}}_5)$, and those in the last 10 positions are replaced by random vectors from $\mathcal{N}({-10}_5,10{\mathrm{I}}_5)$.

\textbf{DGP 6}: VMA(1) model with noise from a multivariate skew normal distribution.

The multivariate skew normal distribution is discussed by \cite{chan1986} and \cite{azzalini1999} and here we use the parameterization in the latter, $\mathcal{SN}_5({\xi, \Omega, \alpha})$ . We set ${\xi}={0}_5$, ${\Omega}=10{\mathrm{I}}_5$ and ${\alpha}=(1,2,3,4,5)$.

\textbf{DGP 7}: VMA(1) model with noise from a multivariate $t(5)$ distribution.

\textbf{DGP 8}: VMA(1) model with noise from a multivariate $t(8)$ distribution.

For the multivariate \textit{t}-distribution $t_v({\mu, \Sigma})$, we set ${\mu}={0}_5$, ${\Sigma}=10{\mathrm{I}}_5$, $v=5$ for DGP 7 and $v=8$ for DGP 8.

\subsection{Additional simulation}\label{subsec:app2}

In this subsection, we present two additional simulations, namely DGP 3 and 4, to enrich Subsection \ref{subsec1} and assess the efficacy of the asymptotics for Gaussian processes.

For each model, the vectorized $\Cov\left(l_{k}, l_{k'}\right)$ and $\Delta^*({a}_{kk'})$ in (\ref{eq:deltaa}) for $k,k'=1,...,5$ are each a $25$-dimensional vector. 
See their plots in Fig.\ref{fig23} for DGP 3 and 4 respectively.
\begin{figure}[!htbp]
\centering
\makebox{\includegraphics[width=0.8\textwidth]{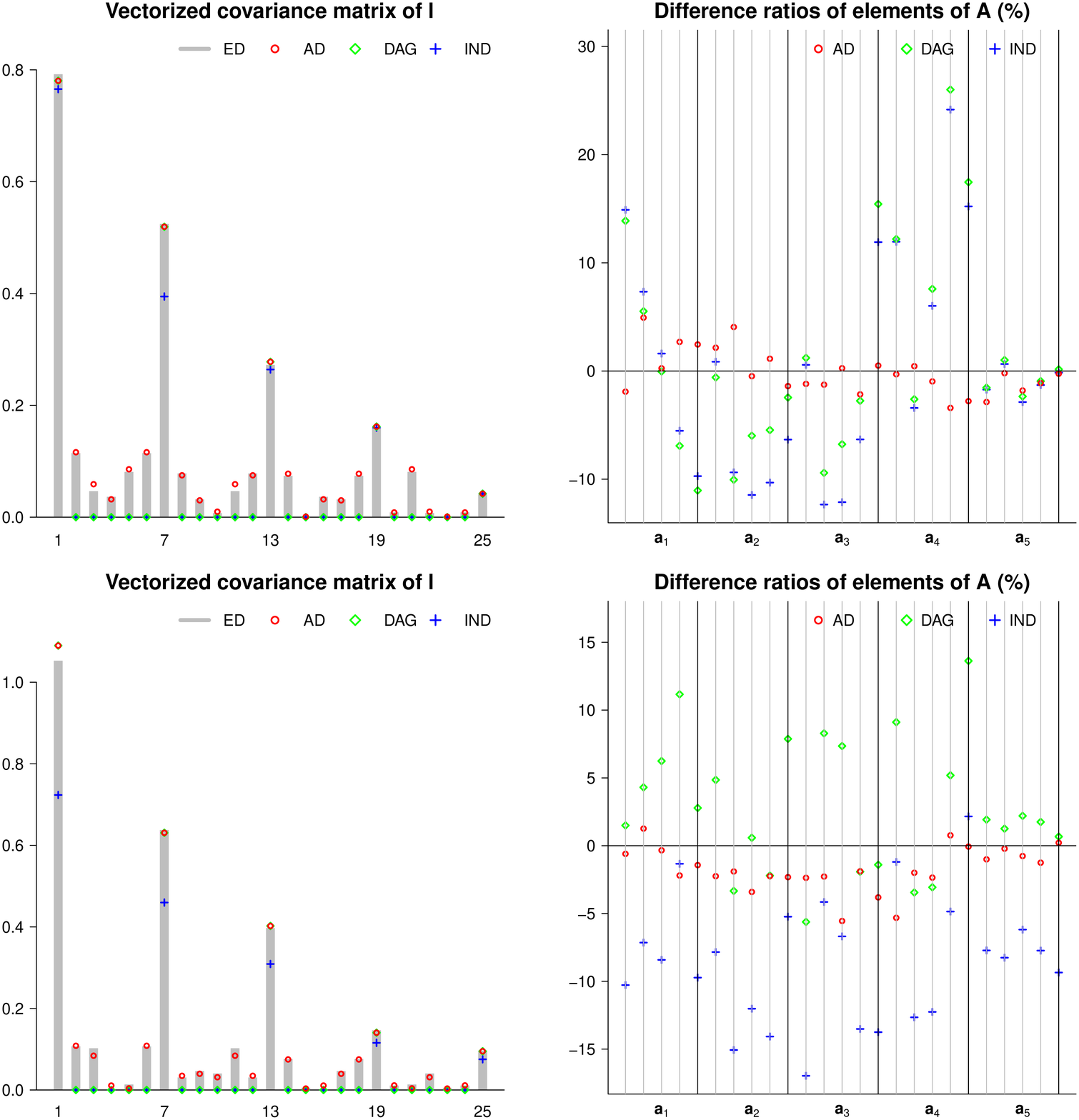}}
\caption{The vectorized $\Cov\left(l_{k}, l_{k'}\right)$ and $\Delta({a}_{kk'})$ for $k,k'=1,...,5$ for DGP 3 (Top pane) and DGP 4 (Bottom pane).} 
\label{fig23}
\end{figure}
For the vectorized $\Cov\left(l_{k}, l_{k'}\right)$, we can see AD fits ED well. IND tends to give a smaller $\Var\left(l_{k}\right)$, while DAG gives the same $\Var\left(l_{k}\right)$ as AD, both cases being consistent with the theoretical results.
However, for $\Cov\left(l_{k}, l_{k'}\right), k \neq k'$, IND and DAG claim that $l_{k}$ are uncorrelated, which is not supported by ED.

For $\sigma({a}_{kk'})$, AD performs pretty well and the difference ratios are controlled within 6\%. In contrast, IND and DAG have poor performance; the difference ratio can even reach a figure as big as 20\% for DGP 3.

As for the proportion of variation, Table \ref{table:prop23} shows $\Delta({r}_k)$ subscripted by AD, DAG and IND.
\begin{table}
\caption{\label{table:prop23}The difference $\Delta({r}_k)$ in percentage for $k=1,...,4$ for DGP 3 and DGP 4.}
\centering
\footnotesize
\begin{tabular}{rrrrr|rrrr}
  \hline
   & \multicolumn{4}{c|}{DGP1} & \multicolumn{4}{c}{DGP2}\\
 $k$ & 1 & 2 & 3 & 4 & 1 & 2 & 3 & 4 \\
  \hline
AD & -0.007 & -0.000 & 0.000 & 0.000 & 0.011 & 0.010 & 0.003 & 0.003 \\  
DAG & 0.044 & 0.026 & 0.043 & 0.020 & 0.035 & 0.011 & 0.018 & 0.005 \\ 
IND & 0.031 & 0.008 & 0.035 & 0.018 & -0.068 & -0.069 & -0.028 & -0.023 \\ \hline
\end{tabular}
\end{table}
Among the three asymptotic standard deviations, $\sigma_{\mathrm{AD}}({r}_k)$ is most close to $\sigma_{\mathrm{ED}}({r}_k)$, with the differences controlled within 0.02\%.
Although the differences for $\sigma_{\mathrm{DAG}}({r}_k)$ and $\sigma_{\mathrm{IND}}({r}_k)$ are larger, they are all controlled within 0.1\%, which is negligible.

\section*{Supplementary material}
\label{SM}
Supplementary material includes R codes for the empirical example in Section \ref{sec:eg}.

\renewcommand\bibname{\large \bf References}

\end{document}